\DeclarePairedDelimiter{\norm}{\lVert}{\rVert}
\numberwithin{equation}{section}
\begin{document}
\normalem

\theoremstyle{definition}
\newtheorem{theorem}{Theorem}
\newtheorem{definition}{Definition}
\newtheorem{lemma}{Lemma}
\newtheorem{corollary}{Corollary}

\def\bra#1{\langle #1 |}
\def\ket#1{| #1 \rangle}
\def\inner#1#2{\langle #1 | #2 \rangle}

\def\ext{\textrm{ext} \: }
\def\rep{\textrm{rep} }
\def\Mm{\textrm{Mm}}

\title{A Holographic Entanglement Entropy Conjecture for General Spacetimes}
\author{Fabio Sanches}
\email{fabios@berkeley.edu} 
\author{Sean J. Weinberg}%
 \email{sjweinberg@berkeley.edu}
\affiliation{Berkeley Center for Theoretical Physics and Department of Physics\\
University of California, Berkeley, CA 94720, USA }%

\bibliographystyle{utcaps}

\begin{abstract}
We present a natural generalization of holographic entanglement entropy proposals beyond the scope of AdS/CFT by anchoring extremal surfaces to holographic screens.  Holographic screens are a natural extension of the AdS boundary to arbitrary spacetimes and are preferred codimension 1 surfaces from the viewpoint of the covariant entropy bound.  A broad class of screens have a unique preferred foliation into codimension 2 surfaces called leaves.  Our proposal is to find the areas of extremal surfaces achored to the boundaries of regions in leaves.  We show that the properties of holographic screens are sufficient to prove, under generic conditions, that extremal surfaces anchored in this way always lie within a causal region associated with a given leaf.  Within this causal region, a maximin construction similar to that of Wall proves that our proposed quantity satisfies standard properties of entanglement entropy like strong subadditivity.  We conjecture that our prescription computes entanglement entropies in quantum states that holographically define arbitrary spacetimes, including those in a cosmological setting with no obvious boundary on which to anchor extremal surfaces.
\end{abstract}

\maketitle

\tableofcontents


\section{Introduction}
\label{intro}

A theory of quantum gravity should not apply only to asymptotically locally anti-de Sitter (AlAdS) spacetimes.  For this reason, the AdS/CFT correspondence   \cite{Maldacena:1997re, Witten:1998qj}, although immensely successful, has fallen short of a description of the quantum mechanics of spacetime.  The AdS restriction is severe: Maldacena's conjecture does not apply in an obvious way to even the cosmological spacetime we find ourselves in.

If a quantum theory applies to general spacetimes, it is desirable that it reduces to AdS/CFT in the appropriate cases.  This suggests a strategy for guessing properties of a complete theory: consider specific aspects of AdS/CFT and devise generalizations that are applicable to other spacetimes.  If one knew only of Special Relativity, she could guess aspects of General Relativity by thinking to ``promote'' the flat metric to a dynamical one.  Similar statements can be made about the relation between many other pairs of theories.  But retrospective examples obscure the challenge: one cannot confidently know what to promote (and how to promote it) to enlarge the regime of validity of a given theory.

Holographic entanglement entropy, proposed by  Ryu and Takayanagi (RT) \cite{Ryu:2006bv}, proved by Lewkowycz and  Maldacena \cite{Lewkowycz:2013nqa}, and made covariant by Hubeny, Rangamani, and Takayanagi (HRT) \cite{Hubeny:2007xt}, is a beautiful property (or, in the covariant case, conjecture) of AdS/CFT.  Below we describe a promotion of holographic entanglement entropy beyond the scope of AdS/CFT that applies just as well to cosmological spacetimes as it does to asymptotically AdS spacetimes.  In the case of the latter, it reduces to the HRT proposal.  Moreover, the promoted holographic entanglement entropy satisfies, for nontrivial reasons, expected properties of entanglement entropy like strong subadditivity.

The HRT prescription provides a way to compute entanglement entropy of a spatial region $A$ in a quantum state dual to an AlAdS spacetime.  The procedure is to consider $\partial A$, the boundary of the spatial region, and to find the area of a codimension 2 extremal surface that is anchored to $\partial A$.  A na\"{\i}ve extension of this idea to general spacetimes would be to take $A$ to be a region in the conformal boundary of an arbitrary spacetime.  This approach fails: what is the boundary of a closed FRW universe with past and future singularities?

In our proposal, we anchor extremal surfaces to a \emph{holographic screen}.  Holographic screens are codimension 1 surfaces that appear to be the most natural place for quantum states dual to arbitrary geometries to live on.  In fact, they were proposed by Bousso \cite{Bousso:1999cb} in an attempt to find the analogue of the AdS boundary when extending holography to general spacetimes.  If one believes the covariant entropy bound \cite{Bousso:1999xy}, then there is essentially no other reasonable class of surfaces for this purpose.\\

\emph{Outline}.  In section \ref{section2} we first review the concept of holographic screens \cite{Bousso:1999cb} with an emphasis on the recent developments of Bousso and Engelhardt \cite{Bousso:2015mqa,Bousso:2015qqa} which identified a class of screens that satisfy an area monotonicity law. We then give the definition of \emph{holographic screen entanglement entropy} and list a number of its key properties.  We conclude the section by stating our \emph{screen entanglement conjecture}---a proposal that holographic screen entanglement entropy actually measures von Neumann entropy in a putative holographic description of general spacetimes.  Section \ref{sec_SSA} contains technical developments including proofs of the properties of screen entanglement entropy that are advertised in section \ref{section2}.  Section \ref{sec_cosmology} gives cosmological examples of holographic screens and their extremal surfaces.  We focus particularly on FRW universes that approach 
de Sitter space at late times.
Section \ref{sec_conclusion} concludes by reviewing the procedure for computing screen entanglement entropy and by suggesting extensions to our proposal
such as possible methods for computing subleading contributions to holographic screen entanglement entropy.

\section{Holographic Screen Entanglement Entropy}
\label{section2}


We open this section with a brief review of holographic screens, especially past and future holographic screens.  Readers that are already familiar with the content \cite{Bousso:1999xy,Bousso:1999cb,Bousso:2015mqa,Bousso:2015qqa} may still find it useful to read through these paragraphs to become familiar with our conventions and notation.  Throughout this paper we will work in a globally hyperbolic spacetime $M$ of dimension $d$ that satisfies the null energy condition.  We assume that the spacetime satisfies the genericity conditions laid out in \cite{Bousso:1999cb,Bousso:2015qqa}.

Suppose that $B$ is an orientable spacelike codimension $2$ submanifold of $M$.  It is possible to find an independent pair of future directed null vector fields on $B$ that are everywhere orthogonal to $B$.  If one of these vector fields has vanishing null expansion on $B$, we will say that $B$ is \emph{marginal}.  If one vector field has 
zero expansion on $B$ while the other has negative (positive) expansion on $B$, we say that $B$ is \emph{marginally trapped (marginally anti-trapped)}.

A \textit{past holographic screen}  is a codimension-1 submanifold $\mathcal{H}$ of the spacetime that is foliated by marginally anti-trapped  compact spacelike surfaces called \textit{leaves}. The foliation into leaves is unique: other splittings of $\mathcal{H}$  cannot satisfy the marginally anti-trapped condition.  A \textit{future holographic screen} is instead foliated by marginally trapped surfaces. In this paper, we will always assume that leaves have the topology of $S^{d-2}$.

Holographic screens are generated by null foliations:  if $\{N_r\}$ is a null foliation of a spacetime, it is possible to identify a family of leaves $\{\sigma(r)\}$ with $\sigma(r) \subset N_r$ by finding the codimension 2 surface of maximal area on each null surface.  In general, this will break the values of the parameter $r$ into open intervals, some of which correspond to past holographic screens and others corresponding to future screens.\footnote{It is also possible that for some values of $r$, $\sigma(r)$ does not have a definite sign for $\theta^l$.  We leave the investigation of this scenario to future work.}  Isolated values of $r$ that lie between past and future screens correspond to the case where $\sigma(r)$ is an isolated extremal sphere which can join a past and future screen.  Such a sphere will not be considered to lie on a past or future holographic screen by convention.  This occurs in the case of a closed universe with a big crunch: see figure \ref{fig_big_crunch}.

\begin{figure}
\centering
\includegraphics[width=8cm]{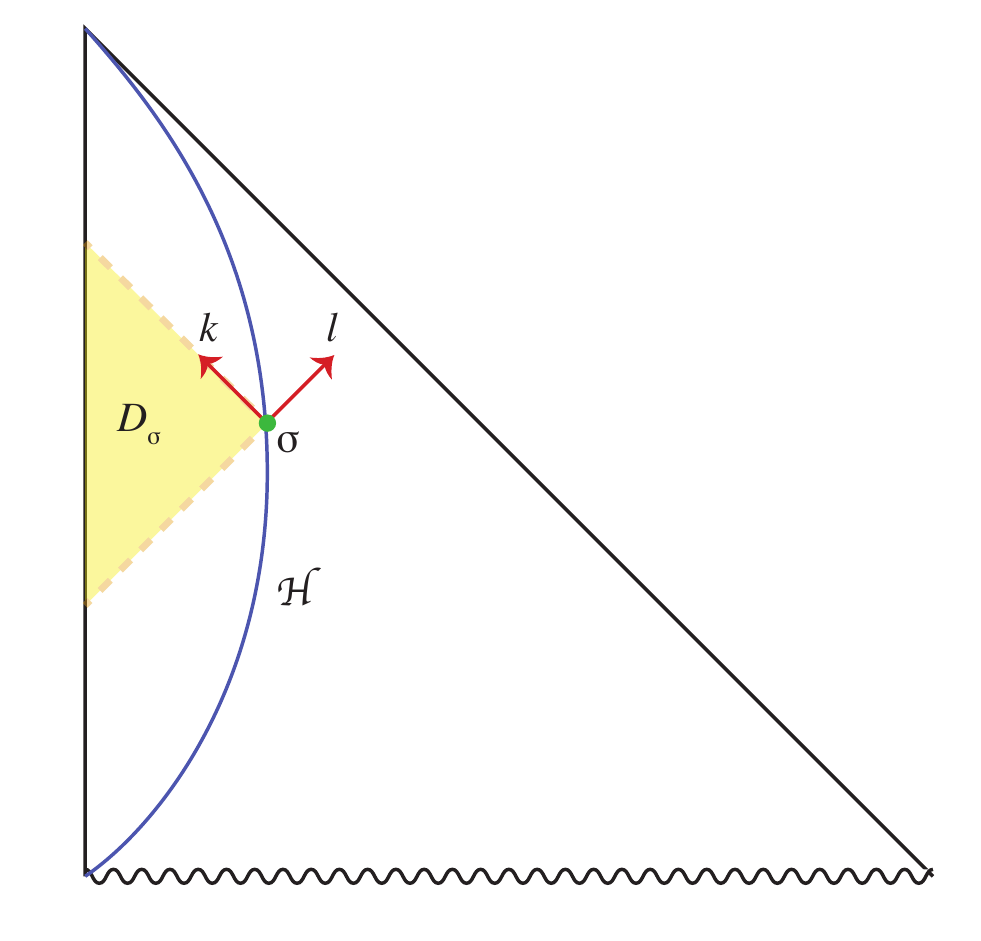}
\caption{An  example of a past holographic screen $\mathcal{H}$.  One particular leaf $\sigma$ is highlighted here along with its null orthogonal vector fields $k$ and $l$ satisfying $\theta^k=0$ and $\theta^l>0$.  The causal region $D_\sigma$ plays a critical role in our generalization of holographic entanglement entropy.}
\label{fig_screen_def}
\end{figure}

Some of the simplest examples of holographic screens arise in the ``observer-centered'' case where we take $\{N_r\}$ to be the set of past light-cones of an observer's worldline in some spacetime.  In the case of FRW cosmology with the observer taken to be comoving, such holographic screens are just apparent horizons.  Figure \ref{fig_screen_def} shows an example of such a holographic screen.  See also figures \ref{dS_identifications} (top) and \ref{fig_big_crunch}.   Because past holographic screens are often generated in this way, we will mostly focus on the case of past screens throughout this paper.  However, all results below apply equally well to future screens with appropriate modifications.

Because null foliations are highly non-unique, holographic screens are also non-unique.  For example, in the observer-centered case, a past holographic screen can be obtained obtained by considering the surfaces of maximal area on the past light-cones of an observer's worldline if the maximal area surfaces are anti-trapped and compact which we assume.  In this case, performing a modification to the worldline will modify the holographic screen.\footnote{Note that the non-uniqueness of holographic screens for a given spacetime fits well with the ideas of \cite{Nomura:2013nya, Nomura:2014voa, Nomura:2014woa} where a strong emphasis is placed on the importance of ``fixing the gauge'' in quantum gravity.  This was clearly discussed in \cite{Nomura:2013nya} in which the role of a gauge-fixed apparent horizon (essentially a holographic screen though not a past or future screen) was discussed.  We do not commit to the pictures described in these papers.}    From this point of view, holographic screens appear to be ``pro-complementarity'' objects. The potential importance of this aspect of screens is further discussed below.

Suppose that $\mathcal{H}$ is a past holographic screen.  Let $\sigma$ be a leaf of $\mathcal{H}$ and let $k$ and $l$ denote, respectively, the ingoing and outgoing future-directed null surface-orthogonal vector fields on $\sigma$.  (It may be useful to refer to figure \ref{fig_screen_def}.)   Then, the condition that $\sigma$ is marginally anti-trapped means that
\begin{equation} 
\label{theta_cond}
\begin{split}
\theta^k = 0 \\
 \theta^l > 0
 \end{split}
\end{equation}
where $\theta^k$ and $\theta^l$ denote the expansion of congruences in the $k$ and $l$ directions at $\sigma$.

Every holographic screen comes with a \emph{fibration}.  A fibration is a family of curves generated by a nonvanishing vector field $h$ on and tangent to $\mathcal{H}$ with the property that $h$ is orthogonal to every leaf.  If we extend the vector fields $k$ and $l$ to all of $\mathcal{H}$ (so that they are surface orthogonal to every leaf), then $h = \alpha l + \beta k$ where $\alpha$ and $\beta$ are scalar functions on $\mathcal{H}$.  $h$ is not required to be timelike, spacelike or null and, in fact, can switch between these three cases on one screen.  Thus, holographic screens need not have definite signature.\footnote{This is the key distinguishing feature between past (and future) holographic screens and related objects including future outer trapping horizons and  dynamical horizons \cite{Hayward:1993wb,Hayward:1997jp,Ashtekar:2003hk,Ashtekar:2005ez} that were introduced in an attempt to find a ``quasi-local'' definition of a black hole.  Past and future holographic screens can be regarded as a synthesis such ideas with those of \cite{Bousso:1999xy}.}  Lacking a definite signature, normalization of $h$ is arbitrary.  Nonetheless, it is convenient to write the leaves of $\mathcal{H}$ as $\sigma(r)$ where $r$ is some (non-unique) parameter and to then normalize $h$ by the condition $dr(h) = 1$.

Bousso and Engelhardt proved that $\alpha >0$ at every point in $\mathcal{H}$ and concluded that leaves have strictly increasing area \cite{Bousso:2015mqa,Bousso:2015qqa}.  More precisely, the area of $\sigma(r_2)$ is greater than the area of $\sigma(r_1)$ if $r_2 >r_1$.  In fact, if $\norm{\cdot}$ denotes the area functional, then
\[
\frac{d}{dr} \norm{\sigma(r)} = \int_{\sigma(r)} d^{d-2}y \sqrt{g^{(\sigma(r))}}\:  \alpha \: \theta^l
\]
which is positive by equation \ref{theta_cond} and the fact that $\alpha>0$.  Here, $g^{(\sigma(r))}$ denotes the induced metric on $\sigma(r)$.  Note, in particular, that
the area is \emph{strictly} increasing for all intervals of $r$. The inequality would not be strict if it were not for the genericity conditions of  \cite{Bousso:2015qqa}.

\subsection*{Definition and Properties of Holographic Screen Entanglement Entropy}

As before, let $\mathcal{H}$ denote a past holographic screen.  Everything below can be modified to the case of a future holographic screen without difficulty.

It is helpful to emphasize the following result which follows from the genericity conditions of  \cite{Bousso:2015qqa}:

\begin{itemize}
\item  \emph{Strict Focusing.}  If $B$ is a codimension 2 spacelike surface, the four surface-orthogonal null congruences have strictly decreasing expansion as they move away from $B$.
\end{itemize}
This means that there is always enough matter content everywhere in the spacetime to focus neighboring null geodesics.  If $M$ fails to satisfy this condition, it can be made to do so by sprinkling a very small amount of classical matter everywhere.  

As discussed above, there is a unique foliation of $\mathcal{H}$ into anti-trapped leaves.  Let $\sigma$ be a particular leaf in this foliation and let $k$ and $l$ denote the vector fields on $\sigma$ that satisfy equation \ref{theta_cond}.  Because $M$ is globally hyperbolic, there exists a Cauchy surface $S_0$ containing $\sigma$ such that  $S_0 \setminus \sigma$ consists of a disconnected interior and exterior.  The interior of $S_0$ is defined so that a vector on $\sigma$ pointing toward the interior takes the form $c_1 k - c_2 l$ with $c_1,c_2>0$.  Let $S$ denote the union of the interior of $S_0$ with $\sigma$.  We will assume that $S$ is compact and that it has the topology of a solid ball.  Now let $D_\sigma$ be the domain of dependence of $S$, $D_\sigma = D(S)$, with the convention that $D_\sigma$ includes orthogonal null surfaces generated by $k$ and $-l$.  

\begin{figure}
\centering
\includegraphics[width=8cm]{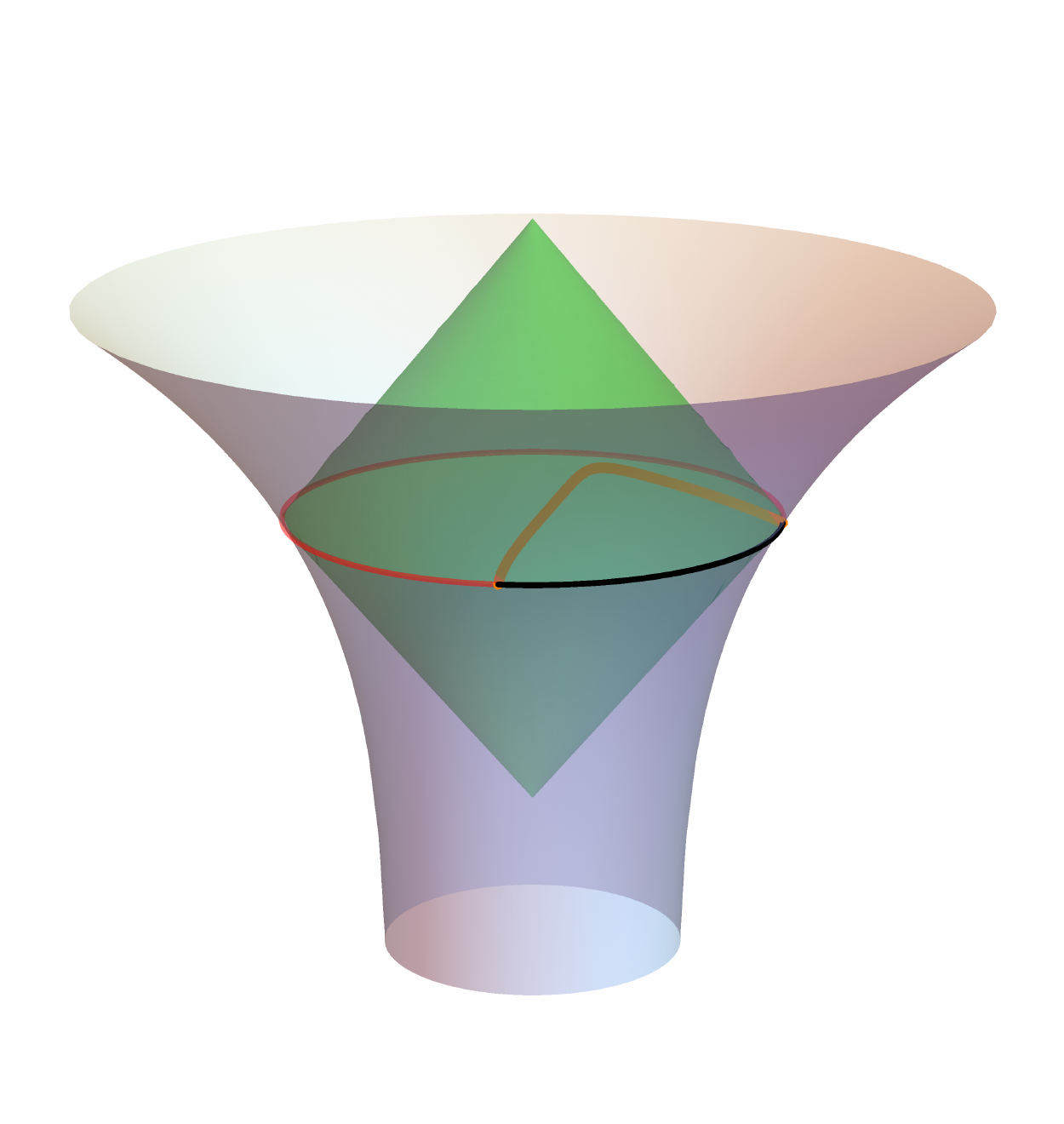}
\caption{This figure depicts our construction of holographic entanglement entropy in general spacetimes.  The horn-shaped surface is a past holographic screen $\mathcal{H}$.  The black and red codimension 2 regions together form a single leaf $\sigma$.  The black segment represents a region $A$ and the extremal surface $\ext(A)$ (orange) is anchored to its boundary.  The causal region $D_\sigma$ is the green diamond (both interior and boundary).  Note that $\ext A \subset D_\sigma$.}
\label{fig_screen_3d}
\end{figure}

Suppose that $A$ is a $d-2$ dimensional submanifold of $\sigma$ with a boundary.  Consider the set of extremal codimension 2 surfaces that are anchored to and terminating at $\partial A$, and contained entirely in $D_\sigma$ (see figure \ref{fig_screen_3d}).  In section \ref{sec_SSA} we will give conditions on $D_\sigma$ that ensure that this set is not empty.  Taking the existence of such a surface for granted, let the one of minimal area be denoted by $\ext(A)$ and define the \emph{holographic screen entanglement entropy} (or \emph{screen entanglement entropy} for brevity) of $A$ as

\begin{equation}
\label{entropydef}
S(A) = \frac{\norm{\ext(A)}}{4}.
\end{equation}

The quantity $S(A)$ is the most natural generalization of the HRT proposal to general spacetimes.  We emphasize that we have defined screen entanglement entropy geometrically without reference to a quantum theory.  The term ``entanglement entropy'' is only meant suggestively.  Nonetheless, below we state a \emph{screen entanglement conjecture}: that $S(A)$ is in fact the von Neumann entropy of a subsystem of a holographic quantum state for general spacetimes.  Regardless of the validity of this conjecture, we are free to study $S(A)$ as we have defined it.  As we will see, the properties of holographic screens ensure that screen entanglement entropy possesses numerous properties reminiscent of von Neumann entropy which we now discuss. 

\subsubsection*{Properties of Holographic Screen \\* Entanglement Entropy and Extremal Surfaces}

\begin{itemize}
\item \emph{Existence and Containment.} In section \ref{sec_SSA} we provide conditions for $\ext (A)$ to exist.  This is a nontrivial issue because of the ``containment condition'' that $\ext A \subset D_\sigma$.  Arguments that $D_\sigma$ contains an extremal surface rely critically on the assumption that $A$ is in a leaf of a holographic screen.  Moreover, the condition that  $\ext (A) \subset D_\sigma$ gives rise to properties of holographic screen entanglement entropy like strong subadditivity (see below) and will allow us to reasonably define an entanglement wedge for $A$.  For an example of the importance of the containment condition, see equation \ref{not_contained} below and the paragraphs around it.

\item \emph{(Strong) Subadditivity.} Suppose that $A$ and $B$ are regions in $\sigma$.  Then,
\[
S(A) + S(B) \geq S(A \cup B) + S(A \cap B)
\]
where $S$ is the function defined in \ref{entropydef}.  This result holds regardless of whether or not $A$ and $B$ intersect as long as we take the convention that $S(\emptyset) = 0$.  As we will see in section \ref{sec_SSA}, the proof of this is a modified version of Wall's \cite{Wall:2012uf} ``maximim'' proof for the HRT case.  This does not mean that strong subadditivity is an obvious result: most of the work in section \ref{sec_SSA} is to show that the properties of leaves of holographic screens are sufficient to generalize Wall's arguments to our context.

\item \emph{Page Bounded.}  Define the extensive entropy of $A$ as $S_{\textrm{extensive}}(A) = \norm{A} / 4$.  Then, the holographic screen entanglement entropy satisfies the following \emph{Page bound}:\footnote{The term ``Page bound'' is motivated by Page's considerations of the entanglement entropies of subsystems \cite{Page:1993df}.}
\begin{equation}
\label{thermal_bound}
S(A) \leq \min\{S_{\textrm{extensive}}(A) , S_{\textrm{extensive}}(\sigma \setminus A) \}.
\end{equation}
This is a simple consequence of the maximin construction we give in section \ref{sec_SSA}.  Note that the area law for holographic screens implies that this inequality becomes a weaker constraint if we transport $A$ along the fibration vector field defined above.  In certain cases, the inequality saturates and $S(A)$ approaches a ``random entanglement limit.''  (See section \ref{sec_cosmology} for examples of this in cosmology.)

\item \emph{Reduction to the HRT Proposal.} As explained in detail in \cite{Bousso:1999cb}, the AdS boundary can be regarded as a holographic screen.  In this case, surfaces of constant time in the dual field theory correspond to leaves, and our proposal becomes identical to the covariant holographic entanglement entropy conjecture of \cite{Hubeny:2007xt}.

\end{itemize}

\subsection*{The Screen Entanglement Conjecture}

We are now in a position to state our conjecture about the role of $S(A)$ in quantum gravity.
This conjecture is the primary concern of this paper.  Nonetheless, we emphasize that the mathematical developments below (e.g. the proof that $S(A)$ satisfies standard properties of von Neumann entropy) do not rely on any conjectural statements.  

Our proposal can be regarded as an extension of a covariant holographic principle due to Bousso which we now review very briefly.  In \cite{Bousso:1999cb}, Bousso integrated the ideas of \cite{'tHooft:1993gx, Susskind:1994vu} with his covariant entropy conjecture \cite{Bousso:1999xy} and proposed that each marginal surface $B$ foliating a holographic screen is associated with a Hilbert space $\mathcal{H}_B$ of dimension $\exp(\textrm{area}(B)/4)$ and that states in $\mathcal{H}_B$ holographically define the state on a null surface $N$ passing through $B$ in the marginal direction.  For our purposes, this holographic principle takes the following form. To each leaf $\sigma$ of a past or future holographic screen we assign a density matrix $\rho_\sigma$.  The density matrix acts on a Hilbert space of dimension $\exp (\textrm{area}(\sigma)/4)$ which may be a subspace of a ``complete'' Hilbert space.\footnote{The concept that the states corresponding to any particular approximately fixed geometry form a subspace of a complete Hilbert space is due to Nomura \cite{Nomura:2011dt, Nomura:2011rb}.  In his formulation, a larger Hilbert space for arbitrary geometries is a direct sum over subspaces for each geometry.  This direct sum itself is only a subspace of the complete Hilbert space which may include an ``intrinsically stringy'' subspace with no geometrical interpretation.  This construction may provide insight into how quantum mechanics can be unitary despite the fact that screens have non-constant area.}  The covariant entropy bound suggests that $\rho_\sigma$ encodes the quantum information on the null slice generated by $k$ and $-k$ where $k$ is the null vector field with $\theta^k=0$ on $\sigma$.

We now assume Bousso's holographic principle and state our new conjecture.  We propose that every region $A$ of $\sigma$ (up to string scale resolution) corresponds to a subsystem of the Hilbert space that $\rho_\sigma$ acts on.  We conjecture that the von Neumann entropy of that subsystem in the density matrix $\rho_\sigma$ is given, at leading order, by $S(A)$ as we have defined it in equation \ref{entropydef}. 

We refer to this statement as the \emph{screen entanglement conjecture}.  Because a holographic quantum theory dual to arbitrary spacetimes is not known, the screen entanglement conjecture is not  a mathematical statement about the relation between two known theories (as in the case of HRT).  Instead, our conjecture suggests a way to compute properties of quantum states in an unknown theory.  It is our hope that this will, in fact, be a step toward developing a quantum theory for arbitrary spacetimes.

\subsubsection*{Nonuniqueness of Holographic Screens and Frame-Dependence in Quantum Gravity}

It was emphasized above that in a given spacetime, there is no unique preferred holographic screen.  As a consequence, screen entanglement entropy
cannot even be defined before first deciding on a particular choice of a screen.  This might seem to put the screen entanglement conjecture
on haphazard footing, but we explain here why this arbitrariness is, in fact, a necessary feature of any generalization of holographic entanglement entropy to general spacetimes.  

Conventional holographic entanglement entropy in AdS/CFT is reference frame dependent in the following sense.  Consider an observer in an asymptotically AdS spacetime $M$ with conformal boundary $\partial M$ following a worldline $p(\tau)$.  Here, $\tau$ is the proper time parameter of the observer.  At a given value of $\tau$, we can consider a spacelike \emph{cut} of
the boundary   \cite{Newman:1976gc,Engelhardt:2016wgb}: 
\[
C(\tau) = \partial J_-\big(p(\tau)\big)  \cap \partial M.
\]
Here, $J_-(q)$ denotes the causal past of a point $q$.  A region $A_\Omega$ on $C(\tau)$ can be specified by considering a portion $\Omega$ of a small sphere on the tip of the past light cone of  the point $p(\tau)$ and following points in $\Omega$ down null geodesics until $\partial M$ is reached.  Thus, once the trajectory $p(\tau)$ is decided upon, we can use the HRT formula to compute $S(A_\Omega,\tau)$, the holographic entanglement entropy of the region $A_\Omega$ on the cut $C(\tau)$.  If the trajectory is changed, $S(A_\Omega,\tau)$ correspondingly transforms.  
At the level of the dual CFT, this discussion corresponds to the fact that  quantum states and their time-dependence have a gauge-redundancy that is fixed by making a choice of time-slicing on the boundary $\partial M$.

In the case of the screen entanglement conjecture and a spacetime that is not asymptotically AdS, a null foliation must be selected to fix a holographic screen.  As discussed above, a simple way to do this is to choose a curve $p(\tau)$, and, at any given $\tau$, follow along the past light-cone of $p(\tau)$ until a marginal surface $\sigma(\tau)$ is obtained.  The role of $\sigma(\tau)$ is analogous to that of the cut $C(\tau)$ in asymptotically AdS spacetimes.  The foliation dependence of screen entanglement entropy
is closely tied to the frame-dependence of the HRT formula.  This is an example of ``fixing the gauge'' in quantum gravity, a concept developed in  \cite{Nomura:2011rb}.

In the case of asymptotically AdS spacetimes, no matter what worldline $p(\tau)$ is chosen, the union of all of its cuts will always be a subset of the boundary.\footnote{This follows trivially from the definition of $C(\tau)$.  However, in some cases past directed null geodesics of $p(\tau)$  may fail to reach $\partial M$.} In general spacetimes however, the particular holographic screen obtained by taking the union over all $\tau$ of $\sigma(\tau)$ will  depend on the choice of the worldline.  Thus, the surface on which holographic quantum states are defined is no longer tethered to the spacetime.  This is a basic property of Bousso's holographic principle, not one that arises only in the more extended framework of this paper.  We regard this aspect of holographic screens as being in the spirit of black hole complementarity, where quantum information is not attached to a fixed spacetime position (e.g. a qubit is not inside or outside a black hole) until an observer is selected to describe the system.

\section{Proofs of Strong Subadditivity and Other Relations}
\label{sec_SSA}

In this section we prove key technical results about holographic screen entanglement entropy including many of the properties advertised above.  The notation and conventions we will use are the same as those given in section \ref{section2}.  In particular, $\mathcal{H}$ is a past holographic screen in a globally hyperbolic spacetime of dimension $d$ that satisfies the genericity conditions of \cite{Bousso:2015qqa}.  $\sigma$ is a compact leaf of $\mathcal{H}$ which we assume to have the topology of $S^{d-2}$.  $k$ and $l$ are null orthogonal vector fields on $\sigma$ satisfying equation \ref{theta_cond}.  $S_0$ is a Cauchy slice containing $\sigma$ and $S$ is the portion of $S_0$ that is enclosed by $\sigma$  including $\sigma$ itself (the enclosed side is defined in section \ref{section2}).  $S$ is assumed to have the topology of a compact $d-1$ ball.  $D_\sigma$ is the domain of dependence of $S$.

As always, the case of a future holographic screen is omitted because it presents no additional subtlety.

\subsection*{Existence and Containment of Extremal Surfaces}

As discussed in section \ref{section2}, it is nontrivial and critical to show the existence of  an extremal surface anchored to $\partial A$  that lies entirely in $D_\sigma$.  We now prove that such a surface exists under very generic conditions.  Our first step is to show that $\ext(A)$ exists in the case that $D_\sigma$ is compact.  This is a common situation\footnote{Suppose that the future and past ingoing light-sheets of $\sigma$ terminate at caustics rather than singularities.  Let $C_+$ and $C_-$ denote the set of the first caustics encountered (local or nonlocal) by null geodesics in the future and past light sheets respectively.  Then, if  $D_\sigma = J_-(C_+) \cap J_+(C_-)$, we can conclude that $D_\sigma$ is compact.  This follows from the fact that $C_\pm$ inherits the compactness of $\sigma$ and from the fact that global hyperbolicity implies that $J_-(K_1) \cap J_+(K_2)$ is compact if $K_1$ and $K_2$ are compact.} although it is not the case if the ingoing light sheets of $\sigma$ encounter a singularity.

\begin{figure}
\centering
\includegraphics[width=8cm]{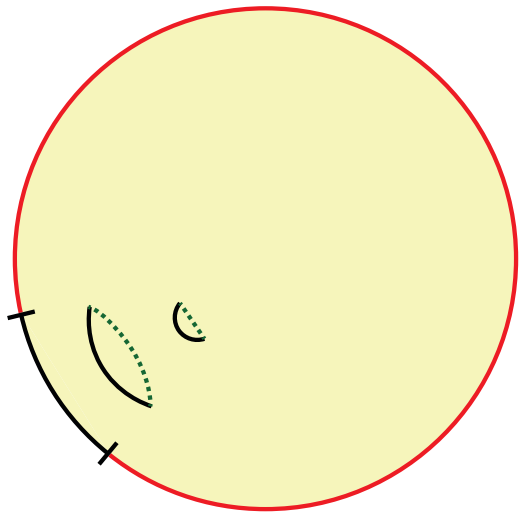}
\caption{The proof of lemma \ref{inside_lem} involves a continuous family of surfaces $A_s$ along with their extremal surfaces (dotted curves).}
\label{fig_lemma_1}
\end{figure}

\begin{lemma}
\label{inside_lem}
If $D_\sigma$ is compact, then there exists a codimenson 2 extremal surface anchored and terminating at $\partial A$ that lies entirely in $D_\sigma$  and that intersects $ \partial D_\sigma$ only at  $\partial A$. 
\end{lemma}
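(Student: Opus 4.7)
My approach combines the compactness of $D_\sigma$ with a barrier property of the null pieces of $\partial D_\sigma$, deployed along a continuous one-parameter family $\{A_s\}_{s\in[0,1]}$ of anchoring regions in $\sigma$ with $A_1 = A$ and $A_0$ a thin collar of $\partial A$ so small that $\ext(A_0)$ exists trivially and lies just inside $D_\sigma$. The plan is to show that the subset
\[
T = \{ s \in [0,1] : \ext(A_s) \text{ exists, lies in } D_\sigma, \text{ and } \ext(A_s)\cap\partial D_\sigma = \partial A_s\}
\]
is both open and closed in $[0,1]$, so that $T = [0,1]$ and in particular $1 \in T$.

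The crux is the barrier property of $\partial D_\sigma \setminus \sigma$. I would decompose this set as $N_+ \cup N_-$, where $N_+$ is the future null hypersurface generated by the $k$-congruence off $\sigma$ and $N_-$ is the past null hypersurface generated by $-l$ off $\sigma$. By \eqref{theta_cond} and strict focusing, the expansion of the $k$-generators equals $0$ on $\sigma$ and is strictly negative on $N_+\setminus\sigma$; similarly $\theta^{-l}=-\theta^l<0$ on $\sigma$ and becomes strictly more negative along $N_-\setminus\sigma$. Both null pieces are therefore trapping barriers, and the Lorentzian maximum principle for null mean curvature forbids a smooth spacelike extremal surface sitting inside $D_\sigma$ from being tangent to $N_\pm$ at an interior point: tangential contact would force the strictly negative barrier expansion onto the extremal surface, contradicting extremality.

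Given the barriers, for each $s$ I would produce $\ext(A_s)$ via a Wall-style maximin construction: over all Cauchy slices $\Sigma$ of $D_\sigma$ containing $\partial A_s$, take the area-minimizing codimension 2 surface in $\Sigma$ anchored at $\partial A_s$, then maximize its area over $\Sigma$. Compactness of $D_\sigma$ guarantees the variational problem has a solution; the barrier property keeps the on-slice minimizers off $N_\pm\setminus\sigma$; and standard regularity promotes the maximizer to a smooth spacelike extremal surface meeting $\partial D_\sigma$ only along $\partial A_s$. Openness of $T$ follows from continuous dependence of the maximin surface on the anchor, and closedness follows from compactness of $D_\sigma$ combined with the strict-focusing barrier, which prevents any subsequential limit from drifting onto $N_\pm\setminus\sigma$. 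The principal technical obstacle is uniformly controlling the maximin construction through the deformation, in particular excluding pinching or topology change in $\ext(A_s)$ as $s$ varies; here the strictness of $\theta^k<0$ on $N_+\setminus\sigma$ (and its analogue on $N_-$), rather than mere nonpositivity, is what makes the barrier genuinely repulsive and closes the open-and-closed argument.
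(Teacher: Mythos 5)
Your proposal is essentially the paper's argument: both hinge on (i) the decomposition of $\partial D_\sigma \setminus \sigma$ into the null pieces generated by $k$ and $-l$, which strict focusing turns into genuinely repulsive extremal-surface barriers that an extremal surface cannot touch tangentially, and (ii) a continuity argument along a one-parameter family of anchor regions interpolating to $A$ inside the compact region $D_\sigma$. The only substantive differences are in execution rather than route: the paper sweeps the family $A_s$ out from a single point in the \emph{interior} of $D_\sigma$, keeping $A_s$ diffeomorphic to $A$ for $0<s<1$ so that no topology change of the anchor occurs (whereas your collar-of-$\partial A$ starting point forces the inner boundary component to pinch off as $s\to 1$, precisely the "principal technical obstacle" you flag), and the paper does not invoke the maximin machinery at each $s$ in this lemma --- that construction is developed separately afterwards to establish $\Mm(A)=\ext(A)$, so grafting it onto the deformation here is redundant.
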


\begin{proof}
Let $\Sigma_+$ and $\Sigma_-$ denote the future and past ingoing light-sheets of $\sigma$.  We now extend $\Sigma_-$ to a slightly larger light-sheet, $\tilde{\Sigma}_-$, by following the future directed null congruence of $l$.  Because $\theta^l>0$ on $\sigma$, we can make this extension so that $\tilde{\Sigma}_-$ has $\theta^l>0$ everywhere and so that there exists an open set in $\tilde{\Sigma}_-$ containing $\sigma$.

In the language of \cite{Engelhardt:2013tra}, both $\Sigma_+ \setminus \sigma$ and $\tilde{\Sigma}_-$ are extremal surface barriers because they have negative expansion in the $k$ and $-l$ directions respectively.  Moreover, $\partial D_\sigma \subset  \left( \Sigma_+ \setminus \sigma \right)  \cup \tilde{\Sigma}_-$.  It follows that $\partial D_\sigma$ is itself an extremal surface barrier for extremal surfaces in the interior\footnote{In \cite{Engelhardt:2013tra}, extremal surfaces are confined to regions referred to as the ``exterior'' of an extremal surface barrier.  The interior of $D_\sigma$, i.e. $D_\sigma \setminus \partial D_\sigma$, is analogous to exterior regions studied by Wall and Engelhardt.} of $D_\sigma$.

Now consider the region $A$.  The spherical topology\footnote{We remind the reader that our conventions are those laid out in the first paragraph of section \ref{section2}.  In particular, we are making simplifying topological assumptions about $\sigma$ and $S$.  We will leave it to future work to investigate the consequences of relaxing these assumptions.} of $\sigma$ ensures that it is possible to introduce a continuous one-parameter family of submanifolds of $D_\sigma$, $A_s$, such that

\begin{itemize}
\item $A_0$ consists of a single point in the interior of $D_\sigma$
\item $A_1 = A$
\item for $0<s<1$, $A_s$ is a codimension 2 submanifold of the interior of $D_\sigma$ that is diffeomorphic to $A$.
\end{itemize}
This is shown in figure \ref{fig_lemma_1}.  Note, in particular, that if $s<1$, $A_s \cap \partial D_\sigma = \emptyset$.  

If $\epsilon>0$ is sufficiently small, then the extremal surface of minimal area that is anchored to $\partial A_\epsilon$ lies entirely in the interior of $D_\sigma$.  Denote this extremal surface by $\Gamma(\epsilon)$. Consider increasing the value of the parameter $s$ from $\epsilon$ to $1$.  For each value of $s$, construct an extremal surface $\Gamma(s)$ (not necessarily the one of minimal area) anchored to $\partial A_s$. The compactness of $D_\sigma$ (which ensures that it is bounded and has no singularities) together with the fact that, as discussed above, $\partial D_\sigma$ is an extremal surface barrier, allows us to take $\Gamma(s)$ to not jump discontinuously  and to be contained in the interior of $D_\sigma$ for all $s<1$.  When we take the limit sending $s$ to 1, the extremal surface anchored to $\partial A$ must intersect $\partial D_\sigma$ at $\partial A$ and nowhere else:  if it did intersect $\partial D_\sigma$ outside of $\partial A$, the extremal surface would be locally tangent to an extremal surface barrier with strictly nonzero null extrinsic curvature.

\end{proof}

\begin{figure}
\centering
\includegraphics[width=8cm]{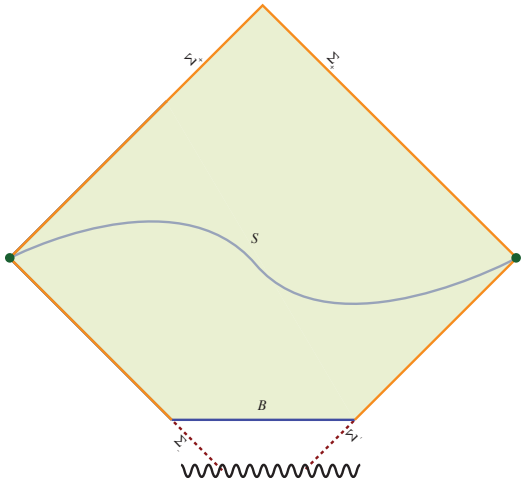}
\caption{The idea of a compact restriction is shown here.  The restriction $R$ is the shaded region along with its boundary, the blue and orange lines.  $\partial R$ consists of two parts: an extremal surface barrier $B$ (blue) and a portion of $\partial D_\sigma$ (orange).  In this figure, the barrier $B$ protects extremal surfaces in $R$ from a singularity.  Not shown are extremal surfaces in $R$, none of which contact $\partial R$ except at their anchor on the leaf $\sigma$.}
\label{fig_compact_restriction}
\end{figure}

The unwanted assumption that $D_\sigma$ is compact (which fails in the event that $\Sigma_+$ or $\Sigma_-$ encounter a singularity) can be dropped if there exists a codimension 0 submanifold (with boundary) of $D_\sigma$, $R$, which ``restricts'' extremal surfaces (see figure \ref{fig_compact_restriction}).  By this we mean that

\begin{enumerate}
\item $R$ is compact,
\item There exists an open set $U$ containing $S$ with $D_\sigma \cap U = R \cap U$, and
\item $\partial R = \left( \partial D_\sigma \cap R \right) \cup B$ where $B$ is an extremal surface barrier for codimension 2 extremal surfaces inside in $R$.
\end{enumerate}

These conditions for $R$ are designed to ensure that $R$ can be used in lemma \ref{inside_lem} in place of $D_\sigma$ without difficulty.  The existence of such regions $R$ relies on the existence of the barrier $B$.  The arguments in theorem 11 of \cite{Wall:2012uf} show that Kasner singularities are always protected by such barriers.  Hartman and Maldacena \cite{Hartman:2013qma} encountered a barrier protecting black hole singularities from codimension 2 extremal surfaces.  Constant time slices in FRW spacetimes are another example of suitable barriers.\footnote{Many extremal surfaces are anchored at singularities and thus pass through barriers.  This is irrelevant because the barriers we are discussing here play the of $\partial D_\sigma$ in the proof of lemma \ref{inside_lem}.  As a region $A_s$ is deformed from a point inside $R$ into $A \subset \sigma$, extremal surfaces anchored to $\partial A_s$ cannot smoothly pass $B$ or $\partial D_\sigma$.}

Any region $R \subset D_\sigma$ satisfying the conditions will be called a \emph{compact restriction} of $D_\sigma$.  Note that, in particular, if $D_\sigma$ is compact then $D_\sigma$ is a compact restriction of itself.  Our findings can now be summarized by the following improvement upon lemma \ref{inside_lem}:

\begin{theorem}
\label{inside}
If $D_\sigma$ possesses a compact restriction, then there exists a codimenson 2 extremal surface anchored and terminating at $\partial A$ that lies entirely in $D_\sigma$  and that intersects $ \partial D_\sigma$ only at  $\partial A$. 
\end{theorem}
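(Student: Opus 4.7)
My plan is to reduce the theorem to Lemma \ref{inside_lem} by rerunning the same continuous-deformation argument inside the compact restriction $R$ in place of $D_\sigma$. Condition (1) of the definition of a compact restriction supplies the compactness that was the central hypothesis of the lemma, while condition (3) exhibits $\partial R$ as the union of two extremal surface barriers for codimension 2 extremal surfaces in the interior of $R$: the piece $\partial D_\sigma \cap R$ inherits the barrier property from the proof of Lemma \ref{inside_lem}, where $\partial D_\sigma \subset (\Sigma_+ \setminus \sigma) \cup \tilde{\Sigma}_-$ was shown to have definite signs of null expansion in the $k$ and $-l$ directions; the piece $B$ is a barrier by assumption. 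Therefore $\partial R$ as a whole is an extremal surface barrier, and $R$ plays the structural role that was required of $D_\sigma$ in the previous proof.

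Next I would use condition (2) to construct the one-parameter family $A_s$ inside $R$. The open set $U$ containing $S$ with $D_\sigma \cap U = R \cap U$ guarantees that an open neighborhood of $\sigma$ in $D_\sigma$ sits inside $R$; in particular $\sigma \subset R$, so one can shrink $A$ to a point $A_0$ in the interior of $R$ through a family of diffeomorphic submanifolds $A_s$ with $A_1 = A$ that, for $0 \leq s < 1$, remain in the interior of $R$ and avoid $\partial R$. As in the lemma, for sufficiently small $\epsilon > 0$ the minimal-area extremal surface $\Gamma(\epsilon)$ anchored to $\partial A_\epsilon$ lies entirely in the interior of $R$. The compactness of $R$ together with the barrier property of $\partial R$ permits one to transport $\Gamma(s)$ continuously as $s$ increases from $\epsilon$ to $1$, ruling out both discontinuous jumps and escape through the boundary.

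Taking $s \to 1$, the resulting surface $\Gamma(1)$ is anchored to $\partial A$. It cannot touch either piece of $\partial R$ away from $\partial A$, because such a contact would make $\Gamma(1)$ locally tangent to an extremal surface barrier with strictly nonzero null extrinsic curvature, which is forbidden. Hence $\Gamma(1) \cap \partial R = \partial A$. Since $R \subset D_\sigma$ and $\partial A \subset \sigma \subset \partial D_\sigma$, this immediately gives the desired conclusion: $\Gamma(1)$ lies in $D_\sigma$ and meets $\partial D_\sigma$ only on $\partial A$.

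The main obstacle I anticipate is confirming that $\partial D_\sigma \cap R$ genuinely functions as a barrier \emph{within the restricted region $R$}, and that the deformation family $A_s$ can be placed inside $R$ rather than merely inside $D_\sigma$. Both concerns are dispatched by condition (2): near $\sigma$ the two regions coincide, so the barrier analysis and the topological shrinking argument inherited from Lemma \ref{inside_lem} transfer without modification. The remainder of the proof is bookkeeping around the definition of a compact restriction.
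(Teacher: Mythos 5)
Your proposal is correct and follows the same route the paper intends: the paper gives no separate proof of Theorem \ref{inside}, simply remarking that the three conditions defining a compact restriction are designed so that $R$ can replace $D_\sigma$ in the argument of Lemma \ref{inside_lem}, which is exactly the substitution you carry out. Your explicit check that $\partial R = (\partial D_\sigma \cap R) \cup B$ is a barrier and that condition (2) lets the deformation family $A_s$ live inside $R$ is just a careful spelling-out of what the paper leaves implicit.
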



To better appreciate this theorem, it is helpful show that the statement is false if $\sigma$ is not a leaf of a holographic screen.  Consider $2+1$ dimensional Minkowski space with inertial coordinates $(t,x,y)$ and let $\mathcal{C}$ denote the large cylinder satisfying $x^2+y^2 = R^2$ with $R \gg 1$.  Consider the two line segments on $\mathcal{C}$ that are approximately given by 
\begin{align}
\label{not_contained}
\begin{split}
A &= \{(t = \frac{1}{2}|x|,-1>x \geq 0, y = R)\}  \\
B &= \{ (t = \frac{1}{2}|x|,0 \leq x < 1,y=R) \}
\end{split}
\end{align}
and construct any spacelike ``time slice'' on $\mathcal{C}$, $\sigma$, that includes $AB$.  It is easy to see that the extremal surface anchored to $\partial \left( AB \right)$ is a straight line that is timelike related to $AB$ and thus fails to lie within the domain of dependence $D_\sigma$.  To see how severe this problem is, note that the segments $A$ and $B$ fail to satisfy subadditivity of entanglement entropy.  That is, the inequality $S_A + S_B \geq S_{AB}$ is false.  Note that in this example $\sigma$ fails to satisfy equation \ref{theta_cond} because of the kink at $A \cap B$.

\subsection*{A Maximin Construction for Holographic Screens}

Theorem \ref{inside} ensures that holographic screen entanglement entropy is a well-defined quantity in a broad set of cases. We will now demonstrate that this quantity satisfies expected properties of entanglement entropy.  To do this, it is very useful to closely follow \cite{Wall:2012uf} and introduce a maximin construction of $\ext A$.  Our construction will be slightly modified from that used for HRT surfaces anchored to the AdS boundary.  Wall's maximin prescription involves considering a collection of Cauchy slices that are anchored only to $\partial A$.  Because we already know that $\ext A$ lies inside of $D_\sigma$, we will introduce a stronger constraint requiring that we only consider achronal slices that are anchored to all of $\sigma$.

\subsubsection*{Definition and Existence of \Mm(A)}

Our setup remains unchanged.  Fix a past (or future) holographic screen $\mathcal{H}$ in a globally hyperbolic spacetime and let $\sigma$ be a leaf.  We take a Cauchy surface $S_0$ containing $\sigma$ and define $S$ as the closure of the portion of $S_0$ inside of $\sigma$.  As before, we require that $S$ is compact and that it has the topology of a solid $d-1$ ball.  Let $D_\sigma = D(S)$.  We also fix a region $A$ in $\sigma$ with a boundary.  Now define $\mathcal{C}_\sigma$ as the collection of codimension 1 compact achronal surfaces that are anchored to $\sigma$ and that have domain of dependence $D_\sigma$.  Note, in particular, that $S \in \mathcal{C}_\sigma$.  Moreover, note that the global hyperbolicity of $D_\sigma$ ensures that every element of $\mathcal{C}_\sigma$ has the same topology as $S$: that of a compact $d-1$ ball.

Take any $\Sigma \in \mathcal{C_\sigma}$.  Let $\min(\partial A, \Sigma)$ denote the codimension 2 surface of minimal area\footnote{\label{footnote_homology}  Wall \cite{Wall:2012uf} added the condition that $\min(\partial A, \Sigma)$ be homologous to $A$.  While this condition ought to be included in our discussion as well, the assumption that $S$ (and thus every element of $\mathcal{C}_\sigma$) has the topology of a compact $d-1$ ball makes a homology condition trivial.  We leave the task of investigating more general topologies to future work.} on $\Sigma$ that is anchored to $\partial A$.  The existence of $\min(\partial A, \Sigma)$ is guaranteed by the compactness of $\Sigma$ and theorem 9 of \cite{Wall:2012uf}.  Define a function $F: \mathcal{C_\sigma} \rightarrow [0,\textrm{area}(A)]$ by $F(\Sigma) = \textrm{area}(\min(\partial A, \Sigma))$.  Now assume that there exists a $\Sigma_0$ in $\mathcal{C}_\sigma$ that maximizes $F$ (globally).  We now define $\min  (\partial A, \Sigma_0)$ as the maximin surface of $A$, and we will denote it by $\Mm (A)$.  If there are several maximin surfaces, $\Mm(A)$ can refer to any of them.

The existence of $\Mm(A)$ can be proven in many cases by appropriately importing the arguments of theorems 10 and 11 in \cite{Wall:2012uf} which we only briefly describe here.  Consider the Cauchy surface $S_0$ which can be identified as a slice in a foliation of Cauchy surfaces $\{S_t \}$.  Using this definition of time, we can identify a surface $\Sigma \in \mathcal{C}_\sigma$ with a function $t_\Sigma : S_0 \rightarrow \bold{R}$ in a natural way: if $I_x$ denotes the integral curve of $\partial_t$ that passes through a point $x \in S$, then $\Sigma = \{ I_x \cap S_{t_\Sigma(x)}  | x \in S\}$.  From this viewpoint, $F$ can be regarded as a real-valued functional on $\{ t_\Sigma \}$.  Now if $D_\sigma$ is compact, we can find the maximum and minimum values of $t$ for the set $D_\sigma$ to obtain an upper and lower bound on $t_\Sigma$ that applies for all $\Sigma$.  Moreover, the condition that $\Sigma$ be compact and achronal ensures that $\{ t_\Sigma \}$ is equicontinuous.  These facts imply that $\mathcal{C}_\sigma$ is compact (with the uniform topology) and that the extreme value theorem applies to the function $F$.

In the case where $D_\sigma$ is not compact (for instance, due to a singularity terminating a light sheet of $\sigma$), we can still argue that $F$ has a maximum as long as $D_\sigma$ satisfies a condition similar to but slightly stronger than the ``compact restriction'' idea discussed above.  Suppose that $B_+$ is a surface in $\mathcal{C}_\sigma$ which is identical to $\Sigma_+$ in some neighborhood of $S$.  For any $\Sigma \in \mathcal{C}_\sigma$, define another surface $\bar{\Sigma}$ by $t_{\bar{\Sigma}} = \min \{ t_\Sigma , t_{B_+} \}$.  If $B_+$ has the property that for any $\Sigma$ we have $F(\Sigma) \leq F(\bar{\Sigma})$, then we will say that $B_+$ is a \emph{future maximin barrier}.  A past maximin barrier is defined analogously as a surface $B_- \in \mathcal{C}_\sigma$, identical to $\Sigma_-$ in a neighborhood of $S$, such that for any $\Sigma$ we have $F(\Sigma) \leq F(\bar{\Sigma})$ where $\bar{\Sigma}$ is defined by $t_{\bar{\Sigma}} = \max \{ t_\Sigma , t_{B_-} \}$.

Now if $D_\sigma$ possesses both a past and future maximin barrier, then we can restrict our attention to the subset of surfaces in $\mathcal{C}_\sigma$ that satisfy $t_{B_-}\leq t_\Sigma \leq t_{B_+}$.  Let $\mathcal{C}_\sigma(B_-,B_+)$ denote this restricted set.  Because $B_-$ and $B_+$ are compact, $J_+(B_-) \cap J_-(B_+)$ is compact and so the set $\mathcal{C}_\sigma(B_-,B_+)$ is compact in the uniform topology and $F$ has a maximum $\Sigma_0 \in \mathcal{C}_\sigma(B_-,B_+)$.  The definition of past and future maximin barriers ensures us that if $\Sigma \in \mathcal{C}_\sigma$, then $F(\Sigma_0) \geq F(\Sigma)$.  Thus, $\Sigma_0$ is a global maximum for $F$ and we can safely define $\min (\partial A, \Sigma_0)$ as the maximin surface of $A$, $\Mm (A)$.

As in the case of the compact restriction of $D_\sigma$ used in theorem \ref{inside}, it is difficult to find examples where $D_\sigma$ does not possess a past and future barrier.  Wall \cite{Wall:2012uf} argued that such barriers protect maximin surfaces from a wide range of singularities: approximately Kasner singularities, BKL singularities, and FRW big bangs all lead to past or future maximin barriers.  If $\Sigma_\pm$ simply terminate at caustics rather than singularities, then $B_\pm = \Sigma_\pm$ are barriers.  In any event, if $B_\pm$ exist, then the region $J_+(B_-) \cap J_-(B_+)$ provides a compact restriction of $D_\sigma$ in the sense of theorem \ref{inside}.  Thus, the existence of $B_\pm$ ensures both the existence of $\Mm(A)$ as well as the existence of $\ext (A)$.  From here on, we will simply take for granted that a past and future maximin barrier exist.

\subsubsection*{Equivalence of \Mm (A) and \ext(A)}

Below we will argue that $\Mm (A) = \ext (A)$.  However, it is first very useful to introduce two additional definitions first.

\begin{enumerate}
\item Take $\Sigma \in \mathcal{C}_\sigma$ and let $\Gamma$ be a codimension 2 surface anchored to $\partial A$ that lies in $D_\sigma$.  Consider the intersection between $\Sigma$ and the future and past-directed orthogonal null surfaces of $\Gamma$ that are directed toward $A$.  This intersection is called the representative of $\Gamma$ on $\Sigma$ and will be denoted by $\rep (\Gamma, \Sigma)$.

\item The domain of dependence of codimension 1 achronal surfaces anchored to $A \cup \ext A$ lying in $D_\sigma$ will be called the entanglement wedge of $A$.
\end{enumerate}

Note that  $\rep (\Gamma, \Sigma)$ is itself a codimension 2 surface anchored to $\partial A$ that lies on $\Sigma$.  Moreover, if $\Gamma$ is extremal then, by the focusing theorem, $\textrm{area} ( \rep (\Gamma, \Sigma)) \leq \textrm{area}(\Gamma)$.

We now demonstrate that our maximin procedure always finds $\ext A$, the extremal surface of minimal area that is anchored to $\partial A$ and which lies in $D_\sigma$.  While much of the proof is similar to the arguments in \cite{Wall:2012uf}, we will have to pay special attention to the possibility that the maximin surface could run into the boundary of $D_\sigma$. 

\begin{theorem}
\label{mmext}
$\Mm (A) = \ext(A)$.
\end{theorem}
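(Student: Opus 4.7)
The plan is to adapt Wall's proof of the analogous HRT maximin theorem \cite{Wall:2012uf} to our modified setup, where the achronal slices being maximized over are anchored to all of $\sigma$ (rather than just to $\partial A$) and have domain of dependence $D_\sigma$. The new technical issue, absent in the AdS case, is to rule out that $\Mm(A)$ touches $\partial D_\sigma$ away from $\partial A$; this is handled by the same barrier argument that controlled $\ext(A)$ in lemma \ref{inside_lem}. I would first check that $\Mm(A)$ meets $\partial D_\sigma$ only at $\partial A$: since $\partial D_\sigma \setminus \sigma$ has strictly negative null expansion in the inward direction (by strict focusing), it acts as an extremal surface barrier in the sense of \cite{Engelhardt:2013tra}. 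If $\Mm(A)$ touched $\partial D_\sigma$ at an interior point, then at that point $\Mm(A)$ would be tangent to a surface with strictly nonzero null extrinsic curvature, and a small inward deformation of $\Mm(A)$ tangent to $\Sigma_0$ would strictly decrease its area, contradicting that $\Mm(A) = \min(\partial A, \Sigma_0)$.

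Next I would show that $\Mm(A)$ is extremal. The mean curvature vector $H$ of $\Mm(A)$ lives in the normal 2-plane, which decomposes into one direction tangent to $\Sigma_0$ (spacelike) and one direction normal to $\Sigma_0$ (timelike). The tangent component vanishes because $\Mm(A)$ minimizes area on $\Sigma_0$ among surfaces anchored to $\partial A$. For the normal (timelike) component I would invoke the maximin property of $\Sigma_0$. Consider any smooth deformation of $\Sigma_0$ within $\mathcal{C}_\sigma$, compactly supported in the interior of $D_\sigma$ away from $\sigma$. By a standard envelope/first-variation computation, the first variation of $F$ is equal to the first variation of $\mathrm{area}(\Mm(A))$ under the induced motion of $\Mm(A)$, namely $\int_{\Mm(A)} H \cdot v \, dA$, where $v$ is the deformation vector field restricted to $\Mm(A)$. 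Since $\Sigma_0$ maximizes $F$, this integral must be nonpositive for every admissible $v$; taking $v$ freely in the timelike normal direction with either sign forces the timelike component of $H$ to vanish. Combined with vanishing of the tangential component, this yields $H = 0$, so $\Mm(A)$ is an extremal surface in $D_\sigma$ anchored to $\partial A$.

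Finally I would show that $\mathrm{area}(\Mm(A))$ is minimal among all such extremal surfaces. Given any extremal $E \subset D_\sigma$ anchored to $\partial A$, consider its representative $\rep(E, \Sigma_0)$, the intersection of $\Sigma_0$ with the null surfaces fired from $E$ toward $A$; this intersection is well-defined because $\Sigma_0$ is a Cauchy slice of $D_\sigma$ so every future- and past-directed null geodesic from $E$ that remains in $D_\sigma$ meets $\Sigma_0$. By strict focusing together with $\theta = 0$ on $E$, the relevant null expansions become strictly negative away from $E$, so $\mathrm{area}(\rep(E, \Sigma_0)) \leq \mathrm{area}(E)$. But $\rep(E, \Sigma_0)$ is a surface on $\Sigma_0$ anchored to $\partial A$, so by definition of $F$,
\[
\mathrm{area}(\Mm(A)) = F(\Sigma_0) \leq \mathrm{area}(\rep(E, \Sigma_0)) \leq \mathrm{area}(E).
\]
Taking $E = \ext(A)$ gives $\mathrm{area}(\Mm(A)) \leq \mathrm{area}(\ext(A))$, while the reverse inequality is automatic because $\Mm(A)$ is itself an extremal surface anchored to $\partial A$ inside $D_\sigma$ and $\ext(A)$ is by definition the minimum-area such surface. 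Hence the areas agree, and under generic conditions the two surfaces coincide.

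The main obstacle will be the extremality step: making the envelope-theorem manipulation rigorous when $\min(\partial A, \Sigma)$ could fail to be unique or to vary smoothly under perturbations of $\Sigma$. This is precisely the delicate point that Wall handles in \cite{Wall:2012uf}, and I would follow his treatment, working with one-sided inequalities and carefully chosen compactly supported ``wiggle'' deformations to circumvent non-uniqueness. A secondary subtlety is confirming that the admissible deformations in Step 2 stay in $\mathcal{C}_\sigma$, which holds for sufficiently small perturbations supported away from $\sigma$ and from $\partial D_\sigma$, so that achronality, compactness, and the anchoring to $\sigma$ are preserved.
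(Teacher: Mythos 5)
Your overall architecture matches the paper's: (i) show $\Mm(A)$ meets $\partial D_\sigma$ only at $\partial A$, (ii) deduce extremality from the first-variation/maximin argument at interior points, (iii) get minimality among extremal surfaces via representatives and focusing. Steps (ii) and (iii) are essentially the paper's argument and are fine. The gap is in step (i), which is precisely where the new content of this theorem lives.

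Your barrier-plus-tangency argument only disposes of the case in which the touching point $p$ is a limit of points of $\Mm(A)$ interior to $D_\sigma$: there the first-variation argument already makes $\Mm(A)$ extremal at $p$, and tangency of an extremal surface to a barrier with strictly nonzero null expansion is a contradiction. What remains — and what you do not address — is the possibility that an entire open subset of $\Mm(A)$ lies inside $\partial D_\sigma$. On such a set $\Mm(A)$ is not yet known to be extremal (the two-sided deformations needed for the first-variation argument are unavailable at the boundary), so the barrier contradiction does not apply; and your claim that ``a small inward deformation of $\Mm(A)$ tangent to $\Sigma_0$ would strictly decrease its area'' does not follow from minimality on $\Sigma_0$, because near the light sheet the side of $\Sigma_0$ toward $\sigma$ is itself null and the area response to motion along it is governed by $\theta^k$ (which is $\leq 0$ moving \emph{away} from $\sigma$), not by the spatial minimization you have already used. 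The paper handles this case by deforming the maximin slice itself: it slides $\Sigma_0$ slightly down the light sheet to a family $\Sigma_\epsilon$, fires the inward null congruence $L_\epsilon$ from $\min(\partial A,\Sigma_\epsilon)$ back to $\Sigma_0$, and uses focusing to show $F(\Sigma_\epsilon) > F(\Sigma_0)$, contradicting maximality of $F$ at $\Sigma_0$. Separately, the subcase where the open set of $\Mm(A)$ lies on $\sigma$ itself is degenerate for your argument, since $\theta^k = 0$ exactly on the leaf and there is no ``strictly nonzero null extrinsic curvature'' to exploit; the paper must regularize by pushing $\sigma$ up the light sheet to $\sigma_\epsilon$ where $\theta^k_\epsilon < 0$ strictly, rule out boundary contact there by a ``cutting corners'' argument, and then take $\epsilon \to 0$. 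Both of these constructions are needed to close step (i), and neither is supplied or replaced by your barrier argument.
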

\begin{proof}
The argument of theorem 15 in \cite{Wall:2012uf} immediately shows that if a point $p \in \Mm (A)$ is also in the interior of $D_\sigma$ (i.e. $D_\sigma \setminus \partial D_\sigma$), then $\Mm(A)$ is extremal at $p$.  In particular, if $\Mm(A) \cap \partial D_\sigma = \partial A$, then $\Mm(A)$ is an extremal surface everywhere.  We now argue that $\Mm (A)$ in fact cannot ever intersect $\partial D_\sigma$ outside of $\partial A$. 

Suppose there exists $p \in \Mm(A) \cap (\partial D_\sigma \setminus \partial A)$. There must be an open neighborhood of $p$ in $\Mm(A)$ (open in the d-2 dimensional manifold $\Mm(A)$) that is entirely contained in $\partial D_\sigma$.  If this were not the case, $\Mm(A)$ would be extremal at points arbitrarily close to $p$ and would thus be extremal at $p$.  Moreover, $\Mm(A)$ would be tangent to $\partial D_\sigma$ at $p$.  However, $\partial D_\sigma$ is an extremal surface barrier (see lemma \ref{inside_lem}) so this is not possible.  There are now two cases to consider.

\begin{figure}
\centering
\includegraphics[width=8cm]{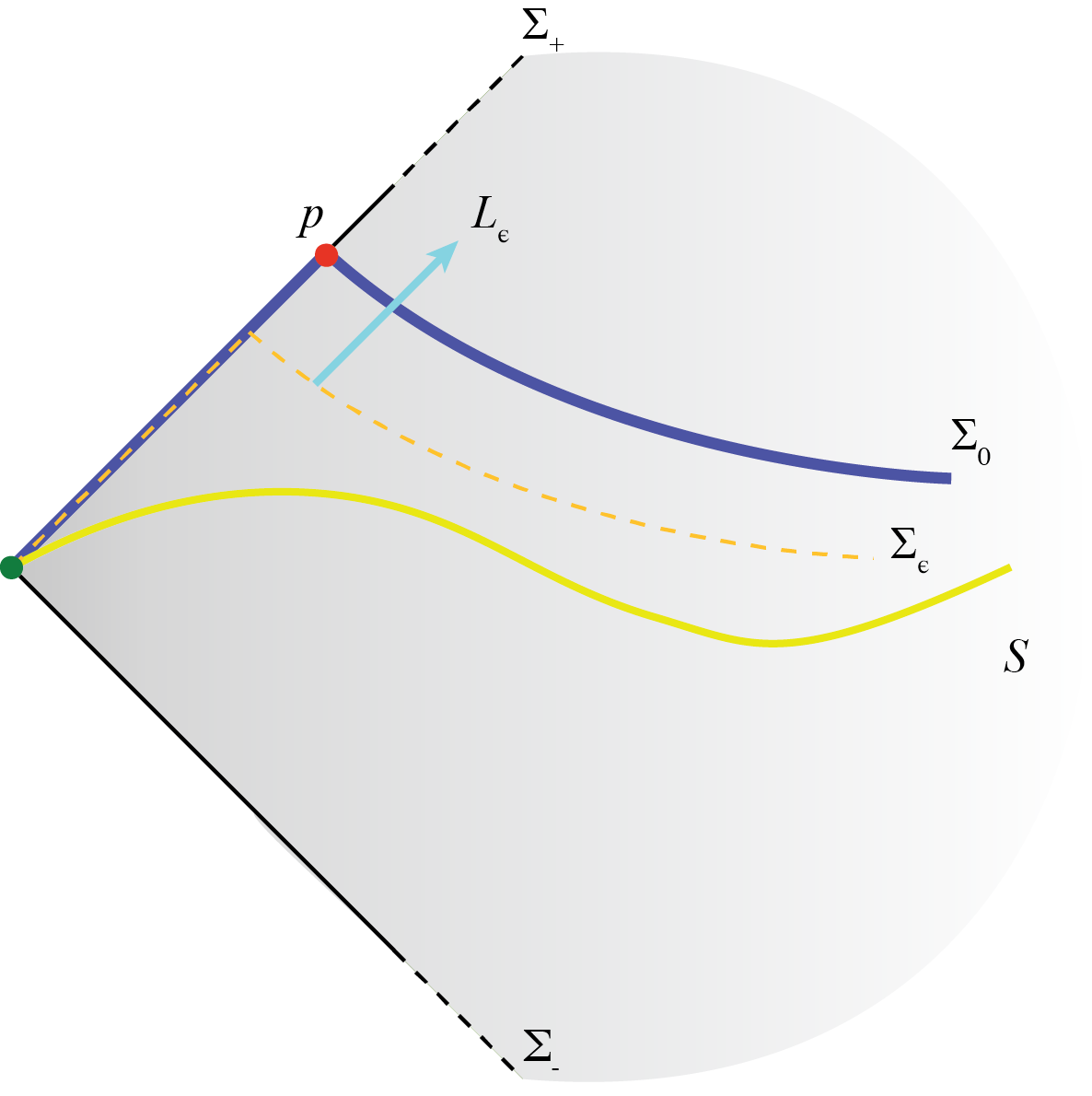}
\caption{This figure depicts the argument of case 1 of the proof of theorem \ref{mmext}.  Note that the surface $S$ is shown here for reference and that it does not play a critical role in the proof.  The shaded region is $D_\sigma = D(S)$ and the green dot is (a cross-section of) the leaf $\sigma$.}
\label{fig_Mm_proof_case_1}
\end{figure}

\begin{itemize}
\item \emph{Case 1}: $p \in \partial D_\sigma \setminus \sigma$. \\
Figure \ref{fig_Mm_proof_case_1} illustrates a construction that we will use for this case.  Take $p \in \Sigma_+$ (the case of $p \in \Sigma_-$ is no different).  By construction, $\Mm(A)$ is minimal on a surface $\Sigma_0$.  There exists a (dimension d-1) open subset $U$ of $\Sigma_0$ containing $p$ such that $U \cap \Mm (A) \subset \Sigma_+$.  Moreover, we can require that $U$ is ``split'' by $\Mm (A)$ into two disconnected sets, $N$ and $V$, such that $N$ is the side of $U$ closer to $\sigma$.  Since $\Sigma_0$ is anchored to $\sigma$, we must have that $N \subset \Sigma_+$ and, in particular, $N$ is null.  On the other hand, $V$ cannot be a subset of $\Sigma_+$.  If it were, then $\Mm(A)$ could decrease its area by being deformed up $\Sigma_+$ (by the focusing theorem).  In particular, we can take $U$ small enough to ensure that $V$ is nowhere null in the direction of $k$.

We now consider the process of slightly sliding $\Sigma_0$ down $\Sigma_+$. More precisely, take a small parameter $\epsilon >0$ and a corresponding one-parameter family of slices $\{\Sigma_\epsilon\}$ that are slightly deformed from $\Sigma_0$ in a way we now describe (an example of $\Sigma_\epsilon$ is depicted in figure  \ref{fig_Mm_proof_case_1} by an orange dashed line).  The surface $\Mm (A) \cap U$ is described by a function $\lambda_0(x)$ giving the affine distance from $\sigma$ up to $\Mm(A)$ at a point $x \in \sigma$.  Now put $\lambda_\epsilon(x) = \lambda_0(x) - \epsilon f(x)$.  Here, $f:\sigma \rightarrow [0,1]$ is a smooth weighting function which equals 1 at the null generator $x_p$ that $p$ lies on. We take $f$ to go to zero smoothly as $x$ moves away from $x_p$, equaling zero exactly when $x$ corresponds to a point outside of $U\cap \Mm(A)$.  For $\lambda < \lambda_\epsilon(x)$, we require that the surface $\Sigma_\epsilon$ is identical to $\Sigma_+$. We extend $\Sigma_\epsilon$ beyond $\lambda_\epsilon$ by parallel transporting tangent vectors on $\Mm(A)$ directed toward $V$ down to $\lambda_\epsilon$.  This prescription does not uniquely fix $\Sigma_\epsilon$, but it is sufficient for our purposes.

Consider the one-parameter family of codimension 2 curves $\min(\partial A, \Sigma_\epsilon)$.  For any $\epsilon>0$, let $L_\epsilon$ denote the future-directed null congruence of $\min(\partial A, \Sigma_\epsilon)$ that points toward the interior of $D_\sigma$ (see figure \ref{fig_Mm_proof_case_1}).  The continuity of  $\min(\partial A, \Sigma_\epsilon)$ as $\epsilon$ varies and the fact that $\Sigma_+$ is a light sheet ensures that there exists and $\epsilon_0>0$ such that for $\epsilon<\epsilon_0$,
\begin{itemize}
\item $L_\epsilon$ intersects $\Sigma_0$ to form a codimension 2 surface on $\Sigma_0$ anchored to $\partial A$ and
\item $L_\epsilon$ has negative future-directed expansion in the region between $\min(\partial A, \Sigma_\epsilon)$ and its intersection with $\Sigma_0$.
\end{itemize}
Denote this intersection by $C_\epsilon$ and observe that $C_0 = \Mm(A)$.  But $\Mm(A)$ is minimal on $\Sigma_0$ so for sufficiently small $\epsilon$,
\[
\textrm{area}(\Mm(A)) < \textrm{area}(C_\epsilon) \leq \textrm{area}(\min(\partial A, \Sigma_\epsilon))
\]
which contradicts the assumption that $\Mm(A)$ has area greater than or equal to the minimal area surface on any slice.  Note that the last inequality above follows from the focusing theorem applied to $L_\epsilon$.

\item \emph{case 2}: $p \in \sigma$. \\
Assume that there exists a (dimension d-2) open subset of $\Mm (A)$ that is contained in $\sigma$. (If not, there must be such an open set in $\partial D_\sigma \setminus \sigma$ which just leads to case 1 above.)  Now consider the null vector field $k$ on $\sigma$ and the geodesics generated by it.  Follow these geodesics from $\sigma$ up along $\Sigma_+$ by a short affine distance $\epsilon>0$ to generate a new codimension 2 surface, $\sigma_\epsilon$, which limits to $\sigma$ when $\epsilon \rightarrow 0$.  The focusing theorem now gives rise to a modified version of equation \ref{theta_cond} at $\sigma_\epsilon$: 

\begin{align}
\label{theta_cond_ep}
\begin{split}
\theta_\epsilon^l > 0 \\
 \theta_\epsilon^k < 0.
 \end{split}
\end{align}
Along with moving $\sigma$ up the light-sheet, we also translate $A$ up the sheet to a one-parameter family of surfaces $A_\epsilon$ that limit to $A$.   Consider the maximin construction applied to codimension 2 surfaces anchored to $\partial A_\epsilon$ that lie on codimension 1 surfaces anchored to $\sigma_\epsilon$.  We denote the result by $\Mm(A_\epsilon)$.  We also define $D_{\sigma_\epsilon}$ in the obvious way.  Now this maximin procedure leads to the same two cases that we are now studying.  The first case, where $\Mm(A_\epsilon)$ intersects $\partial D_{\sigma_\epsilon} \setminus \sigma_\epsilon$ proceeds exactly as it did with $\epsilon=0$.  Now suppose that $ \Mm(A_\epsilon)$ has an open set contained in $\sigma_\epsilon$.  $ \Mm(A_\epsilon)$ must be minimal on some slice $\Sigma_\epsilon$.  However, equation \ref{theta_cond_ep} implies that $\sigma_\epsilon$ has negative (inward) extrinsic curvature on $\Sigma_\epsilon$.  It is thus impossible for $\Mm(A_\epsilon)$ to be minimal on $\Sigma_\epsilon$ since its area could be decreased by ``cutting corners.''

We can thus conclude that $\Mm(A_\epsilon) \cap \partial D_{\sigma_\epsilon} = \partial A_\epsilon$.  This implies that $\Mm(A_\epsilon)$ is extremal.  Taking the limit as $\epsilon \rightarrow 0$, we conclude that $\Mm(A)$ is extremal.  But, given our assumption that part of $\Mm(A)$ lies on $\sigma$, equation \ref{theta_cond} shows that $\Mm(A)$ cannot be extremal since extremal surfaces have zero null expansion in all directions. 
\end{itemize}

At this point it is proven that $\Mm(A)$ is extremal.  All that is left is to show that, of all the extremal surfaces in $D_\sigma$ that are anchored to $\partial A$, $\Mm(A)$ is the smallest.  Let $\Sigma_0 \in \mathcal{C}_\sigma$ be a slice on which $\Mm(A)$ is minimal.  If $\Gamma$ is another extremal surface anchored to $\partial A$ then, as a result of the focusing theorem, we find that 
\[
\textrm{area}(\Mm(A)) \leq  \textrm{area} ( \rep (\Gamma, \Sigma_0)) \leq \textrm{area}(\Gamma).  
\]
\end{proof}

We are now in a position to prove a variety of properties of screen entanglement entropy.  We begin with the ``Page bound'' advertised in section \ref{section2}.
\begin{corollary}
\label{thermal_bound_thm}
If $A$ is a region in the leaf $\sigma$, then
\begin{equation*}
S(A) \leq \min\{S_{\textrm{extensive}}(A) , S_{\textrm{extensive}}(\sigma \setminus A) \}
\end{equation*}
where $S$ deonotes the holographic screen entanglement entropy of $A$ and $S_\textrm{extensive}(X)$ denotes the area of a region $X \subset \sigma$ divided by 4.
\end{corollary}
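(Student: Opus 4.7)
The result should follow almost immediately from Theorem~\ref{mmext} together with the definition of the maximin construction. The key observation is that on the maximin slice itself, both the region $A$ and its complement $\sigma \setminus A$ are admissible competitors in the minimization that defines $\Mm(A)$.

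Concretely, I would first invoke Theorem~\ref{mmext} to write $S(A) = \textrm{area}(\Mm(A))/4$, where $\Mm(A) = \min(\partial A, \Sigma_0)$ for some maximin slice $\Sigma_0 \in \mathcal{C}_\sigma$. By the definition of $\mathcal{C}_\sigma$, every such $\Sigma_0$ is anchored to $\sigma$, so $\sigma \subset \Sigma_0$. This means that $A$ and $\sigma \setminus A$ are both codimension 2 submanifolds of $\Sigma_0$ whose boundaries within $\Sigma_0$ coincide with $\partial A$. Because $\Sigma_0$ has the topology of a $(d-1)$-ball with boundary $\sigma$, the two pieces $A$ and $\sigma \setminus A$ are mutually homologous within $\Sigma_0$, so the homology condition alluded to in footnote~\ref{footnote_homology} is automatically satisfied.

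Since $\min(\partial A, \Sigma_0)$ is by construction the smallest-area codimension 2 surface on $\Sigma_0$ anchored to $\partial A$, and since both $A$ and $\sigma \setminus A$ are admissible candidates, one obtains
\[
\textrm{area}(\Mm(A)) \leq \min\{\textrm{area}(A),\, \textrm{area}(\sigma \setminus A)\}.
\]
Dividing by 4 gives the stated Page bound. There is no real obstacle here: the entire corollary is a one-line consequence of Theorem~\ref{mmext} once one notices that the maximin slice must contain $\sigma$, and hence $A$ and $\sigma \setminus A$, on its boundary. As a byproduct, this also characterizes saturation: the bound becomes tight precisely when one of these extensive candidates actually achieves the minimum on $\Sigma_0$.
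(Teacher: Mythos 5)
Your proposal is correct and is essentially the paper's own proof: both arguments observe that $A$ and $\sigma \setminus A$ lie on the maximin slice $\Sigma_0$ (since every slice in $\mathcal{C}_\sigma$ is anchored to $\sigma$) and are anchored to $\partial A$, so the minimal surface $\Mm(A) = \min(\partial A, \Sigma_0)$ can have area no larger than either. The extra remarks on homology and saturation are fine but not needed.
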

\begin{proof}
$S(A) = \textrm{area(\Mm (A))}/4$ but $\Mm (A) = \min (\partial A, \Sigma_0)$ for some $\Sigma_0 \in \mathcal{C}_\sigma$.  Both $A$ and $\sigma \setminus A$ are codimension $d-2$ dimensional surfaces on $\Sigma_0$ anchored to $\partial A$ so the area of $\Mm (A)$ is less than or equal to the areas of both $A$ and $\sigma \setminus A$. 
\end{proof}

Next we turn to the proof of strong subadditivity for holographic screen entanglement entropy (other properties of entanglement entropy that admit covariant geometrical bulk proofs can be imported here as well).  Unlike the case of theorems \ref{inside} and \ref{mmext}, the arguments below are essentially identical to those of \cite{Wall:2012uf} with little additional subtlety.  We start with our version of theorem 17 in \cite{Wall:2012uf} which states that if $B \subset A$, then $\ext A$ lies ``outside'' of $\ext B$.

\begin{theorem}
\label{same_surface}
Suppose that $A$ and $B$ are regions in the leaf $\sigma$ with $B \subset A$.  Then, 
\begin{enumerate}
\item the entanglement wedge of $A$ contains the entanglement wedge of $B$,
\item there exists a surface in $\mathcal{C}_\sigma$ on which both $\ext A$ and $\ext B$ are minimal.
\end{enumerate}
\end{theorem}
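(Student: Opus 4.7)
The plan is to adapt the strategy of Wall's Theorem 17 to our setting, working inside $D_\sigma$, and to prove statement (2) first; statement (1) will then follow almost immediately.

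First I would choose a slice $\Sigma_A \in \mathcal{C}_\sigma$ that globally maximizes $F_A$, so that by Theorem \ref{mmext} we have $\min(\partial A, \Sigma_A) = \Mm(A) = \ext A$. On $\Sigma_A$ I would then construct $\mu_B$, the minimal codimension $2$ surface anchored to $\partial B$ and lying on the $A$-side of $\ext A$. The first key step is to show that $\mu_B$ and $\ext A$ do not cross. The argument is the standard swap-and-smooth: if they intersected transversally, one could cut each surface along the intersection and reconnect the pieces to form a new surface anchored to $\partial A$ and a new one anchored to $\partial B$, each lying on $\Sigma_A$. The total area is preserved by this exchange, but the new surfaces have kinks along the former intersection locus, and smoothing those kinks strictly decreases area. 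This would contradict the minimality either of $\ext A$ or of $\mu_B$ on $\Sigma_A$. Because Theorem \ref{mmext} ensures $\ext A$ meets $\partial D_\sigma$ only along $\partial A \subset \sigma$, the swap takes place in the interior of $D_\sigma$ and is unobstructed by the boundary.

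The second key step is to promote $\Sigma_A$ to a slice $\Sigma_* \in \mathcal{C}_\sigma$ on which $\mu_B$ is also the outcome of the maximin procedure for $B$. The idea is to deform $\Sigma_A$ only inside the region bounded by $B$ and $\mu_B$ on the $B$-side, leaving a neighborhood of $\ext A$ and a neighborhood of $\mu_B$ unchanged. Within that bounded sub-region the maximin problem for $B$ is a self-contained one whose barriers are inherited from those of $D_\sigma$, so the results established earlier in this section apply and furnish a local maximizer. Gluing this local maximizer to the unchanged part of $\Sigma_A$ gives $\Sigma_*\in\mathcal{C}_\sigma$ that is still achronal with domain of dependence $D_\sigma$, still carries $\ext A$ as a minimal surface, and additionally maximizes $F_B$ with $\mu_B$ as its minimizer. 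Theorem \ref{mmext} then identifies $\mu_B = \Mm(B) = \ext B$, proving statement (2).

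Statement (1) then follows from the non-crossing and from $B \subset A$. On $\Sigma_*$, the portion lying between $B$ and $\ext B$ is an achronal surface contained in the portion lying between $A$ and $\ext A$; taking domains of dependence, which is inclusion-preserving for achronal sets with the same ambient spacetime, yields $W(B) \subset W(A)$.

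The main obstacle I anticipate is the second step: rigorously deforming $\Sigma_A$ inside the $B$-sub-region without disturbing $\ext A$ or $\mu_B$ and without violating achronality or leaving $D_\sigma$. One must verify that past and future maximin barriers for the sub-problem exist (inherited from those assumed for $D_\sigma$) and that the glued slice is genuinely $C^0$ achronal across the seam along $\mu_B$; the swap-and-smooth argument of the first step, while standard, also requires care where $\mu_B$ and $\ext A$ might be tangent rather than transverse, handled by the strict focusing assumption of Section \ref{section2}.
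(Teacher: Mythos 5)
Your first step (the swap-and-smooth non-crossing argument on a slice where $\ext A$ is minimal) is sound, but your second step contains a genuine gap, and it is exactly the gap that the paper's proof --- following theorem 17 of Wall --- is structured to avoid. On your slice $\Sigma_A$, which maximizes $F_A$, the surface $\mu_B$ is only minimal \emph{within the slice}: its variation in directions tangent to $\Sigma_A$ vanishes, but nothing controls its null expansions off $\Sigma_A$, so $\mu_B$ is generically not extremal and hence not $\ext B$. Your proposed fix --- re-running a maximin for $B$ while ``leaving a neighborhood of $\mu_B$ unchanged'' --- cannot repair this: if $\mu_B$ and a neighborhood of it are held fixed, then $\mu_B$ lies on every competitor slice, the restricted maximum of $F_B$ can never exceed $\textrm{area}(\mu_B)$, and no admissible variation is left that would force $\theta = 0$ on $\mu_B$. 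If instead you allow the slice to move near $\mu_B$, the resulting minimal surface is a different surface, you lose the non-crossing relation to $\ext A$ established in your first step, and you no longer have a single slice on which both surfaces are simultaneously minimal. There is also a circularity: restricting the $B$-maximin to the domain of dependence of the region between $B$ and $\mu_B$ presupposes that $\ext B$ lies on the $B$-side of $\ext A$, which is essentially statement (1), the thing to be proven.

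The paper avoids all of this by maximinimizing the \emph{joint} functional $Z = \textrm{area}(\Gamma_A) + \textrm{area}(\Gamma_B)$ over pairs of surfaces constrained to a common slice. At the joint maximum one automatically obtains a single $\Sigma_0 \in \mathcal{C}_\sigma$ on which both surfaces are minimal; because distinct components cannot touch even tangentially (swap-and-smooth plus the comparison of extrinsic curvature traces), a deformation of $\Sigma_0$ localized near $\Gamma_B$ leaves the minimal $\partial A$-anchored area unchanged while it would increase the minimal $\partial B$-anchored area unless $\Gamma_B$ is extremal, and symmetrically for $\Gamma_A$. Extremality of both, their identification with $\ext A$ and $\ext B$ via the representative/focusing argument, and the wedge nesting then all follow from one variational principle. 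If you wish to keep your sequential structure you must supply a substitute for this simultaneous variational step; as written, the claim $\mu_B = \Mm(B) = \ext B$ is not established.
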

\begin{proof}[Sketch of Proof:]
The proof is the same as that of theorem 17 of \cite{Wall:2012uf} so we only sketch it here. For any surface in $\Sigma \in \mathcal{C}_\sigma$, consider a pair of codimension 2 surfaces constrained to lie on $\Sigma$, $\Gamma_A$ and $\Gamma_B$, such that $\Gamma_A$ is anchored to $\partial A$ and $\Gamma_B$ is anchored to $\partial B$.  Then let $Z= \textrm{area}(\Gamma_A) + \textrm{area}(\Gamma_B)$.  We now minimize the value of $Z$ by varying over all possible choices of $\Gamma_A$ and $\Gamma_B$.  After that, we maximize the minimal values of $Z$ by varying over all possible $\Sigma$.

This new maximin procedure gives a well-defined answer for the maximinimal value of $Z$.  Moreover, a slice $\Sigma_0$ results on which both $\Gamma_A$ and $\Gamma_B$ are minimal.  On this slice, it is impossible for  $\Gamma_A$ to cross $\Gamma_B$ as this would necessarily give rise to a surface on $\Sigma_0$ anchored to $\partial A$ with smaller area than $\Gamma_A$.   A further observation is that if a connected component of $A$ is distinct from a component of $B$, the corresponding connected components of $\Gamma_A$ and $\Gamma_B$ cannot come into contact even tangentially.  The argument for this is that the component of $\Gamma_B$ would necessarily have a different trace of its spatial extrinsic curvature than $\Gamma_A$ at points close to the contact point.  This would mean that either $\Gamma_A$ or $\Gamma_B$ is not minimal on $\Sigma_0$.

At this point it is known that components of $\Gamma_A$ or $\Gamma_B$ that are distinct have neighborhoods in $\Sigma_0$ that do not intersect the other surface.  Within such neighborhoods, small deviations $\Sigma_0$ and the minimal surfaces can be made that prove that such surfaces are extremal.

The only remaining step is to show that, in fact, $\Gamma_A$ and $\Gamma_B$ are the extremal surfaces in $D_\sigma$ of minimal area.  If $\Gamma^\prime_A$ is an extremal surface in $D_\sigma$ anchored to $\partial A$, then its representation on $\Sigma_0$ must have larger area than that of $\Gamma_A$ but smaller area than that of $\Gamma^\prime_A$.  Thus, $\Gamma_A = \ext A$.  Similarly, $\Gamma_B = \ext B$.  By construction, both are minimal on the same surface $\Sigma_0 \in \mathcal{C}_\sigma$.  Moreover, because $\Sigma_0$ is achronal, we must have that the entanglement wedge of $A$ contains that of $B$.  
\end{proof}

\begin{corollary}
\label{ssa}
Suppose that $A$, $B$, and $C$ are nonintersecting regions in $\sigma$.  Then,
\[
S(AB) + S(BC) \geq S(ABC) + S(B)
\]
where $XY$ denotes $X \cup Y$ and where the function $S$ is defined in equation \ref{entropydef}.
\end{corollary}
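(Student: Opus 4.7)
The plan is to adapt the standard Wall maximin proof of strong subadditivity to our setting, invoking theorem \ref{same_surface} as a black box to obtain a single slice on which the relevant extremal surfaces are simultaneously minimal. The key observation is that because $B \subset ABC$, theorem \ref{same_surface} applies to the nested pair $(ABC, B)$: there exists a slice $\Sigma_0 \in \mathcal{C}_\sigma$ on which both $\ext(ABC)$ and $\ext(B)$ are simultaneously minimal surfaces anchored respectively to $\partial(ABC)$ and $\partial B$. In other words, $\ext(ABC) = \min(\partial(ABC), \Sigma_0)$ and $\ext(B) = \min(\partial B, \Sigma_0)$.

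Next, on this very same slice $\Sigma_0$, I would consider the auxiliary surfaces $\gamma_{AB} = \min(\partial(AB), \Sigma_0)$ and $\gamma_{BC} = \min(\partial(BC), \Sigma_0)$, which are guaranteed to exist by theorem 9 of \cite{Wall:2012uf} together with the compactness arguments already developed. The standard ``cut and rearrange'' trick then applies: wherever $\gamma_{AB}$ and $\gamma_{BC}$ intersect, one locally swaps the connection pattern to produce two new codimension-2 surfaces $\Gamma_{ABC}$ and $\Gamma_B$ on $\Sigma_0$, anchored respectively to $\partial(ABC)$ and $\partial B$, and satisfying the area-preserving identity
\[
\textrm{area}(\gamma_{AB}) + \textrm{area}(\gamma_{BC}) = \textrm{area}(\Gamma_{ABC}) + \textrm{area}(\Gamma_B).
\]

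The inequality then follows from a two-step comparison. By the minimality of $\min(\partial(ABC), \Sigma_0)$ and $\min(\partial B, \Sigma_0)$ among surfaces on $\Sigma_0$ with the right boundaries, we obtain $\textrm{area}(\Gamma_{ABC}) \geq \textrm{area}(\ext(ABC))$ and $\textrm{area}(\Gamma_B) \geq \textrm{area}(\ext(B))$. By the very definition of the maximin, $\textrm{area}(\ext(AB)) \geq \textrm{area}(\gamma_{AB})$ and $\textrm{area}(\ext(BC)) \geq \textrm{area}(\gamma_{BC})$, since $\Sigma_0$ is just one element of $\mathcal{C}_\sigma$ while $\ext(AB)$ and $\ext(BC)$ achieve the global max over all such slices via theorem \ref{mmext}. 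Chaining these inequalities,
\[
\textrm{area}(\ext(AB)) + \textrm{area}(\ext(BC)) \geq \textrm{area}(\gamma_{AB}) + \textrm{area}(\gamma_{BC}) = \textrm{area}(\Gamma_{ABC}) + \textrm{area}(\Gamma_B) \geq \textrm{area}(\ext(ABC)) + \textrm{area}(\ext(B)),
\]
which is the desired statement after dividing by $4$.

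The only real obstacle is ensuring that the cut-and-rearrange step is legitimate in this geometry, which requires (i) that all of $\gamma_{AB}, \gamma_{BC}, \Gamma_{ABC}, \Gamma_B$ actually lie in $D_\sigma$ so the maximin comparison is meaningful, and (ii) that the homological conditions implicit in ``rearrangement'' are trivially satisfied. Point (i) follows because $\Sigma_0 \subset D_\sigma$ and $D_\sigma$ is an extremal surface barrier (lemma \ref{inside_lem}), forcing all constructions on $\Sigma_0$ to stay inside $D_\sigma$. Point (ii) is handled by the standing assumption that $\sigma$ and $S$ have the topology of a sphere and ball respectively, so that every codimension-2 boundary on $\Sigma_0$ bounds a unique homology region (cf.\ footnote \ref{footnote_homology}). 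With these in place, the argument is essentially identical to Wall's and no new screen-specific subtlety enters beyond those already dispatched in theorems \ref{inside}, \ref{mmext}, and \ref{same_surface}.
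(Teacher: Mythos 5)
Your proof is correct and follows essentially the same route as the paper: both apply theorem \ref{same_surface} to the nested pair $(B, ABC)$ to obtain a common slice $\Sigma_0$ on which $\ext(B)$ and $\ext(ABC)$ are simultaneously minimal, and then run the Headrick--Takayanagi cut-and-rearrange argument on that slice. The only (harmless) difference is in the intermediate surfaces: the paper cuts and pastes the representatives $\rep(\ext(AB),\Sigma_0)$ and $\rep(\ext(BC),\Sigma_0)$, bounding their areas above by $4S(AB)$ and $4S(BC)$ via the focusing theorem, whereas you use the minimal surfaces $\min(\partial(AB),\Sigma_0)$ and $\min(\partial(BC),\Sigma_0)$ and bound them via the maximin property from theorem \ref{mmext}; both choices yield the same chain of inequalities.
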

\begin{proof}
By theorem \ref{same_surface}, we can find a surface $\Sigma_0 \in \mathcal{C}_\sigma$ such that $\ext B$ and $\ext ABC$ are both minimal on $\Sigma_0$.  Let $\tilde{S}(AB)$ and $\tilde{S}(BC)$ denote the areas of the representations of $\ext AB$ and $\ext BC$ on $\Sigma_0$.  Then,
\[
S(AB) + S(BC) \geq \tilde{S}(AB) + \tilde{S}(BC) \geq S(ABC) + S(B)
\]
where the first inequality follows from the focusing theorem and the second inequality follows from the standard geometric proof of strong subadditivity \cite{Headrick:2007km}.
\end{proof}

Note that the inequality $S(A) + S(B) \geq S(AB)$ follows as a special case of this result.

\section{Extremal Surfaces in FRW Cosmology}
\label{sec_cosmology}

The conventional holographic entanglement entropy prescription, with its limitation to asymptotically locally AdS spacetimes, provides very little information about entanglement structure in cosmology.  One of the most intriguing applications of our proposal, therefore, is to calculate holographic screen entanglement entropy in FRW universes.  Assuming the screen entanglement conjecture, the calculations below give the entanglement entropy of subsystems in quantum states that are dual to cosmological spacetimes.

\subsection*{Holographic Screens in FRW Cosmology}

First we review the holographic screen structure of FRW spacetimes.  Consider a homogeneous and isotropic spacetime with the metric
\begin{equation}
\label{FRW_metric}
ds^2 = -d \tau^2 +  a(\tau)^2 \left( d \chi^2 + f(\chi)^2 d \Omega_2^2 \right)
\end{equation}

where $f(\chi) = \sinh(\chi), \chi,$ or $\sin(\chi)$ in the open, flat, and closed cases respectively.  Before computing extremal surfaces we must decide upon a null foliation for the spacetime and then identify the corresponding holographic screen.  Null foliations (and thus holographic screens) are highly nonunique.  The foliation we will consider here is that of past light cones from a worldline at $\chi=0$.

To find the holographic screen for this foliation, it is convenient to introduce a conformal time coordinate $\eta$ such that $d\tau / d\eta = a$.  Then, the past light cone of the point $(\eta = \eta_0, \chi=0)$ satisfies $\chi = \eta_0 - \eta$.  Spheres along the past light cone can be parameterized by the coordinate $\eta$, and their area is given by 
\begin{equation}
\label{area_eta}
\mathcal{A}(\eta) = 4 \pi \: a\Big(\tau(\eta_0-\eta)\Big)^2 \: f(\eta_0-\eta)^2
\end{equation}
Assuming that $a = 0$ is not merely a coordinate singularity, the condition that $\mathcal{A}$ is maximized is equivalent to the condition that $d \mathcal{A}/d \eta = 0$. Thus, equation \ref{area_eta} gives the condition that fixes the holographic screen:
\begin{equation}
\label{screen_constraint}
\frac{f(\chi)}{f^\prime(\chi)} - \frac{1}{\dot{a}(\tau)}=0.
\end{equation} 
The codimension 1 surface defined by this constraint may be timelike, spacelike, or null, depending on the particular choice of FRW spacetime.  The foliating leaves of this holographic screen are spheres of constant $\tau$ and comoving radius $\chi$ satisfying equation \ref{screen_constraint}.  The covariant entropy bound implies that each leaf has sufficient area to holographically encode the information on one past light cone from the worldline at $\chi=0$ \cite{Bousso:1999xy,Bousso:1999cb}.

Let $\sigma(\tau)$ be the leaf of the holographic screen at time $\tau$ and let $\rho(\tau)$ denote the energy density in the universe (measured by comoving observers) at time $\tau$.  Then, one can write a simple expression for the area of a leaf of the holographic screen at time $\tau$, valid for any $f$:
\begin{equation}
\label{screen_area}
\textrm{area}(\sigma(\tau)) = \frac{3}{2 \rho(\tau)}.
\end{equation}
In particular, this expression shows that holographic screens grow in area as the universe expands.

\subsection*{Extremal Surfaces in de Sitter Space}
Consider 3+1 dimensional de Sitter space of radius $\alpha$.  This spacetime is $S^3 \times \mathbf{R}$ with the metric
\[
ds^2 = -d T^2  + \alpha^2 \cosh^2 \left( \frac{T}{ \alpha} \right) d \Omega_3^2 
\]
where $d \Omega_3^2$ is the metric on a unit 3-sphere.  Despite the fact that this spacetime has the form of equation \ref{FRW_metric} (with $f(\chi) = \sin \chi$), it is an awkward setting for the consideration of holographic screens:  the null expansion on the past or future light cones of any point in de Sitter space goes to zero only at infinite affine parameter.  This suggests that the appropriate ``boundary'' of de Sitter space is past or future infinity.  Even if we do attempt to anchor extremal surfaces to spheres at infinity, the analysis in section \ref{sec_SSA} fails to apply because of the assumption made there that leaves are compact.

Fortunately these difficulties can be averted completely by considering an FRW spacetime that asymptotically approaches de Sitter space at late times.  Specifically, we will consider a spacetime of the form of equation \ref{FRW_metric} with vacuum energy density $\rho_\Lambda$ and, in addition, some matter content $\rho_\textrm{matter}(\tau)$ with the property the matter content gives rise to a big bang at $\tau=0$ and dilutes completely\footnote{In particular, we are not considering spacetimes with a big crunch in this section.} as $\tau \rightarrow \infty$.

Equation \ref{screen_area} immediately implies that
\begin{equation}
\label{dS_area_limit}
\lim_{\tau \to  \infty} \textrm{area}(\sigma(\tau)) =  \frac{3}{2 \rho_\Lambda}
= \: 4 \pi  \alpha^2
\end{equation}
where $\alpha = \sqrt{3/8 \pi \rho_\Lambda}$.  Because of the big bang singularity, we must have that $\textrm{area}(\sigma_{\tau=0})=0$.  Thus, by the area law for holographic screens \cite{Bousso:2015mqa, Bousso:2015qqa}, we can conclude that the leaves of our screen are spheres that monotonically increase in area, starting with $0$ area at the big bang, and expanding to approach the de Sitter horizon of area $4 \pi \alpha^2$ at late $\tau$.

%
%
%

Now focus on a late time leaf $\sigma(\tau)$.  As discussed in section \ref{sec_SSA}, given a region $A\subset \sigma(\tau)$ with a boundary, we can determine the holographic screen entanglement entropy of $A$, $S(A)$, by considering an extremal surface anchored to and terminating at $\partial A$.  In the notation of section \ref{sec_SSA}, $D_{\sigma(\tau)}$ is compact so theorem \ref{inside} implies that an extremal surface anchored to $\partial A$ exists and lies inside of $D_{\sigma(\tau)}$.

For any time $\tau$, define
\begin{widetext}
\begin{equation}
\label{thermal_entanglement}
S_\textrm{Page}^\tau(A) =
\begin{cases}
\frac{1}{4} \textrm{area}(A) &  \textrm{area}(A) \leq  \frac{1}{2} \textrm{area}(\sigma(\tau)) \\
\frac{1}{4}\left(\textrm{area}(\sigma(\tau)) - \textrm{area}(A)\right) & \textrm{area}(A) > \frac{1}{2} \textrm{area}(\sigma(\tau)).
\end{cases}
\end{equation}
\end{widetext}
We allow this definition to extend to a function $S_\textrm{Page}^\infty(A)$ where $A$ is a region on a 2-sphere of radius $\alpha$.  This $\tau=\infty$ case is defined exactly as in equation \ref{thermal_entanglement} if we take $\textrm{area}(\sigma(\infty)) = 4 \pi \alpha^2$.

Below we will present an argument that if $A\subset \sigma(\tau)$, then  
\begin{equation}
\label{dS_entanglement}
\lim_{\tau \to \infty} S(A) = S_\textrm{Page}^\infty(A).
\end{equation}
(Note that in this limit, it is implied that $A$ is transported to later and later leaves.)  Thus, we will find that as $\tau \to \infty$, $S(A)$ approaches the random entanglement limit discussed in section \ref{section2}.

Any interpretation of this result is necessarily speculative.  Nevertheless, if one assumes the screen entanglement conjecture, then equation \ref{dS_entanglement} implies that the the quantum state of an FRW universe asymptotically approaching de Sitter space has the property that its $O(\alpha^2)$ degrees of freedom are almost randomly entangled with one-another.  At earlier times, the degrees of freedom are not randomly entangled because $S(A) < S_\textrm{Page}^\infty(A)$.

\subsubsection*{Random Entanglement and the Static Sphere Approximation}

\begin{figure}
\centering     
\subfigure{\label{fig:a}\includegraphics[width=70mm]{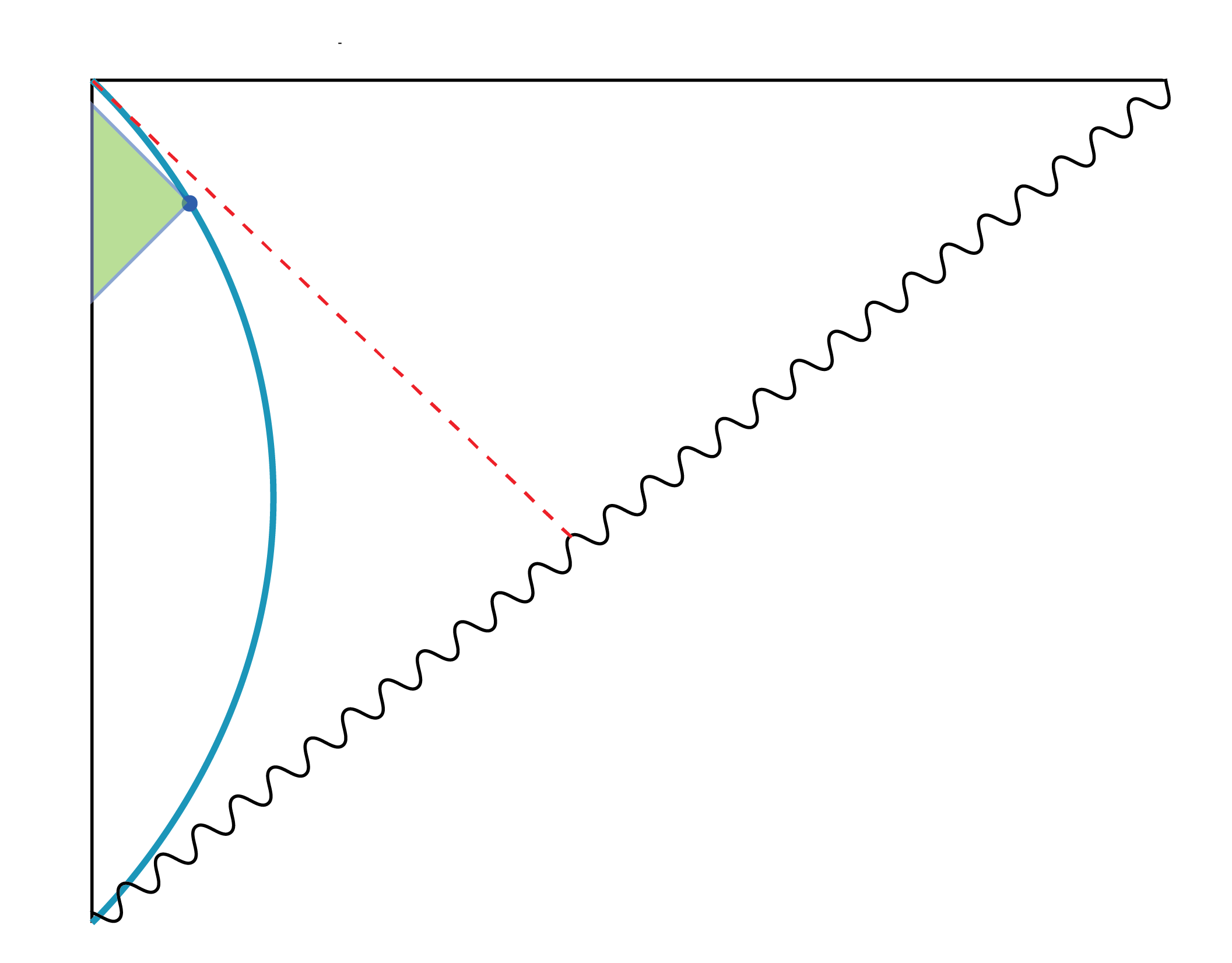}}
\subfigure{\label{fig:b}\includegraphics[width=70mm]{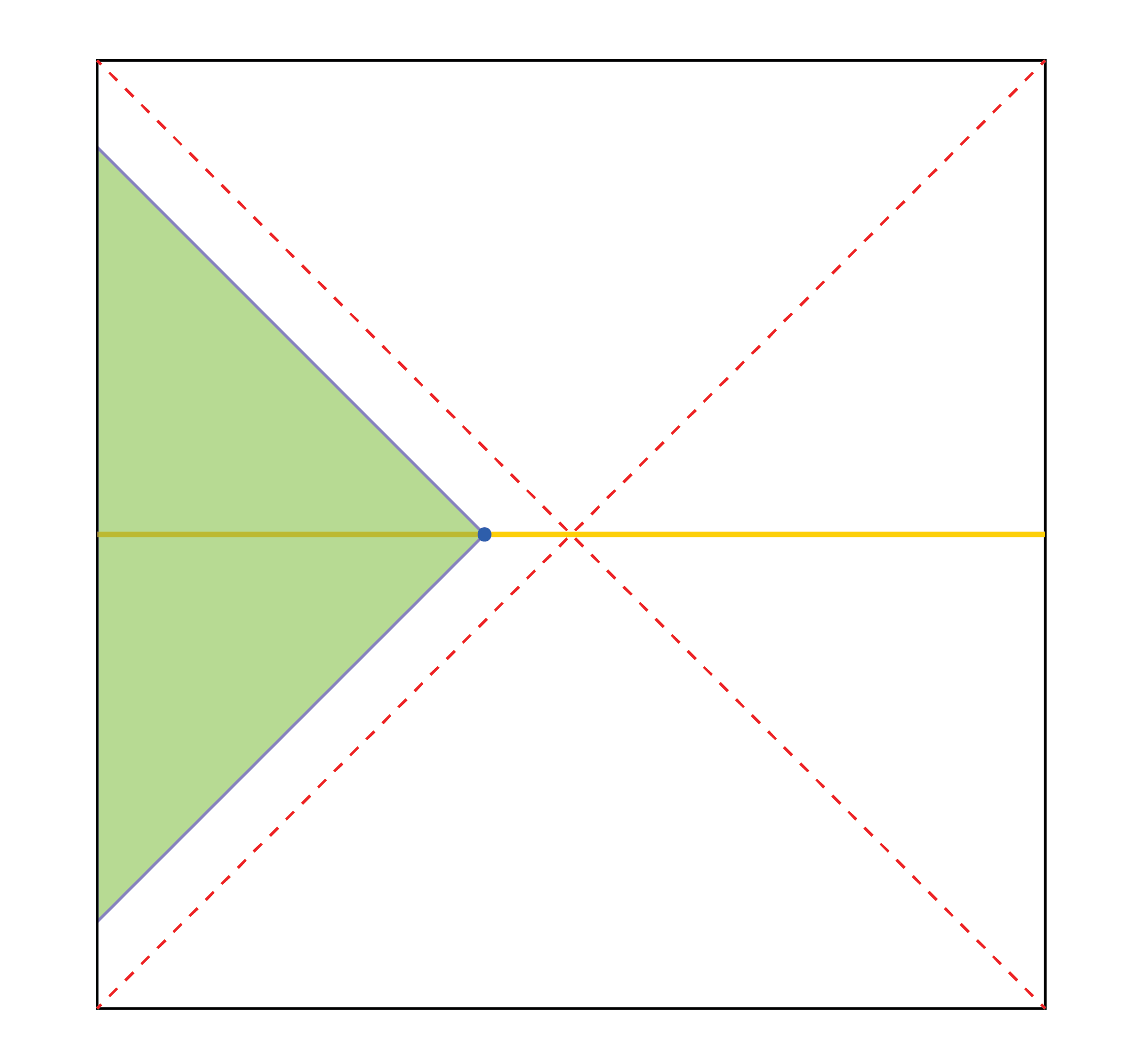}}
\caption{The domain of dependence $D_{\sigma(\tau)}$ for a late time leaf in the flat FRW universe (the small green triangle in the upper diagram) can be approximately mapped to a domain of dependence $D_{\tilde{\sigma}(\tau)}$ in empty de Sitter space (lower diagram).  The mapping becomes increasingly accurate as $\tau$ becomes larger.  The effect of increasing $\tau$ is to move the green triangle in the upper diagram into the top-left corner (along the blue curve), while the green triangle in the lower diagram moves to the right and approaches the entire left static wedge.}
\label{dS_identifications}
\end{figure}

We now present a combination of rigorous arguments, numerical data, and analytic approximations suggesting that the approximate de Sitter cosmological spacetimes discussed above saturate the random entanglement bound in the $\tau \to \infty$ limit.  As before, $\sigma(\tau)$ denotes a leaf at time $\tau$ in an FRW universe with vacuum energy as well as matter energy that dilutes at late time.

The entire region $D_{\sigma(\tau)}$ has a metric that can be made arbitrarily similar to that of a patch of empty de Sitter space by making $\tau$ large.  To see this, first note that points in $D_{\sigma(\tau)}$ have $\chi < \chi_\textrm{screen}(\tau)$ and $\chi_\textrm{screen}(\tau)$ can be made arbitrarily small by making $\tau$ large.  (This follows from equation \ref{dS_area_limit} and the fact that $\lim_{\tau\to \infty} a(\tau) =\infty$.)  Meanwhile, the conformal diagram for our spacetime immediately shows that the minimal value of $\tau$ in $D_{\sigma(\tau)}$ can be made arbitrarily large by making $\tau$ large.  Thus $D_{\sigma(\tau)}$ can be made to only cover arbitrarily large $\tau$ and arbitrarily small $\chi$, in which case our metric of equation \ref{FRW_metric} takes the form
\begin{equation}
\label{flat_dS}
ds^2 \approx -d \tau^2 + c \: e^{2 \tau/\alpha} ( d \chi^2 + \chi^2 d \Omega_2^2 )
\end{equation}
where $c$ is a constant and $\alpha$ is the same constant as before.  Here we have made use of the Friedmann equations.  The right-hand side of this equation is precisely the metric of de Sitter space in flat slicing.  De Sitter space can also be described in static coordinates that make a time-translation Killing vector field manifest:
\begin{equation}
\label{static_dS}
ds^2 \approx -\left(1- \frac{r^2}{\alpha^2} \right)dt^2 + \left(1- \frac{r^2}{\alpha^2} \right)^{-1} dr^2 + r^2 d\Omega_2^2.
\end{equation}
Fortunately, $D_{\sigma(\tau)}$ lies in a region that is well-described by either the flat or static slicing of equations \ref{flat_dS} and \ref{static_dS} respectively.

We can now identify $D_{\sigma(\tau)}$ with a region $D_{\tilde{\sigma}(\tau)} $ where $D_{\tilde{\sigma}(\tau)} $  denotes a corresponding region in exact de Sitter space obtained by finding a sphere $\tilde{\sigma}(\tau)$ in the static patch with area matching that of $\sigma(\tau)$.  While it may seem natural to put $\tilde{\sigma}(\tau)$ at large static time, we can use the $t$ translational symmetry of de Sitter space to place $\tilde{\sigma}(\tau)$ at $t=0$ for all $\tau$.  The effect of increasing $\tau$ is simply to bring $\tilde{\sigma}(\tau)$ closer to the bifurcation sphere on the de Sitter horizon.  This identification is illustrated in figure \ref{dS_identifications}.  Note that as $\tau \to \infty$, the geometry of $D_{\sigma(\tau)} $ and $D_{\tilde{\sigma}(\tau)} $ become arbitrarily similar.

Consider the region $A\subset \sigma(\tau)$ which can be identified with a region $\tilde{A} \subset \tilde{\sigma}(\tau)$.  At large $\tau$, $ \tilde{\sigma}(\tau)$ approaches the equator of a 3-sphere of radius $\alpha$.  The equator itself is an extremal surface so with $\tau<\infty$ but still large, there must be an extremal surface that is close to $\tilde{A}$ but not exactly on it.  Its area will be slightly less than that of $\tilde{A}$.  Note, moreover, that if the area of $\tilde{A}$ exceeds half the area of the equator, then a smaller extremal surface can be obtained by considering the complement of $A$.

This suggests but does not yet prove that at large $\tau$, the holographic screen entanglement entropy of $A$ is almost equal to a fourth of its own area in Planck units if $A$ has less area than half of the de Sitter horizon.  What we have proven so far is that an extremal surface exists with area almost equal to that of $A$ (or $4 \pi \alpha^2 - \textrm{area}(A)$).

What if there is another extremal surface with smaller area than the one we have found?  It is easy to see that this is impossible.  Following the notation in section \ref{sec_SSA}, consider the spacelike surface $\Sigma_0$ that, after mapping to $D_{\tilde{\sigma}(\tau)}$, lies at static time $t=0$, and that and terminates at $\tilde{\sigma}$. ($\Sigma_0$ is most of a hemisphere of the 3-sphere.)  The Riemannian geometry of $S^3$ shows that the surface of minimal area anchored to $\partial A$ is the one we have already found.  If $\Gamma$ is another extremal surface (not necessarily lying on $\Sigma_0$), then its representation on $\Sigma_0$, $\rep(\Gamma,\Sigma_0)$, necessarily has larger area than the extremal surface close to the horizon.  But $\textrm{area}(\Gamma) \geq \textrm{area}(\rep(\Gamma,\Sigma_0))$ so we conclude that $\Gamma$ does not have minimal area.

The arguments above show that the random entanglement limit is saturated at large $\tau$.  Taking $0 \ll \tau< \infty $ and $A \subset \sigma(\tau)$, we now explain a way to obtain a more accurate estimate for $S(A)$ than $S_\textrm{Page}^\tau(A)$.  Calculating $S(A)$ without taking the large $\tau$ limit is more involved than what was done above.  Nonetheless, it is worthwhile to investigate this case to better understand how the Page bound limit is approached.  In particular, it is of interest to understand how the discontinuity of the derivative of $S_\textrm{Page}^\infty$ arises.

\begin{figure}
\centering
\includegraphics[width=8cm]{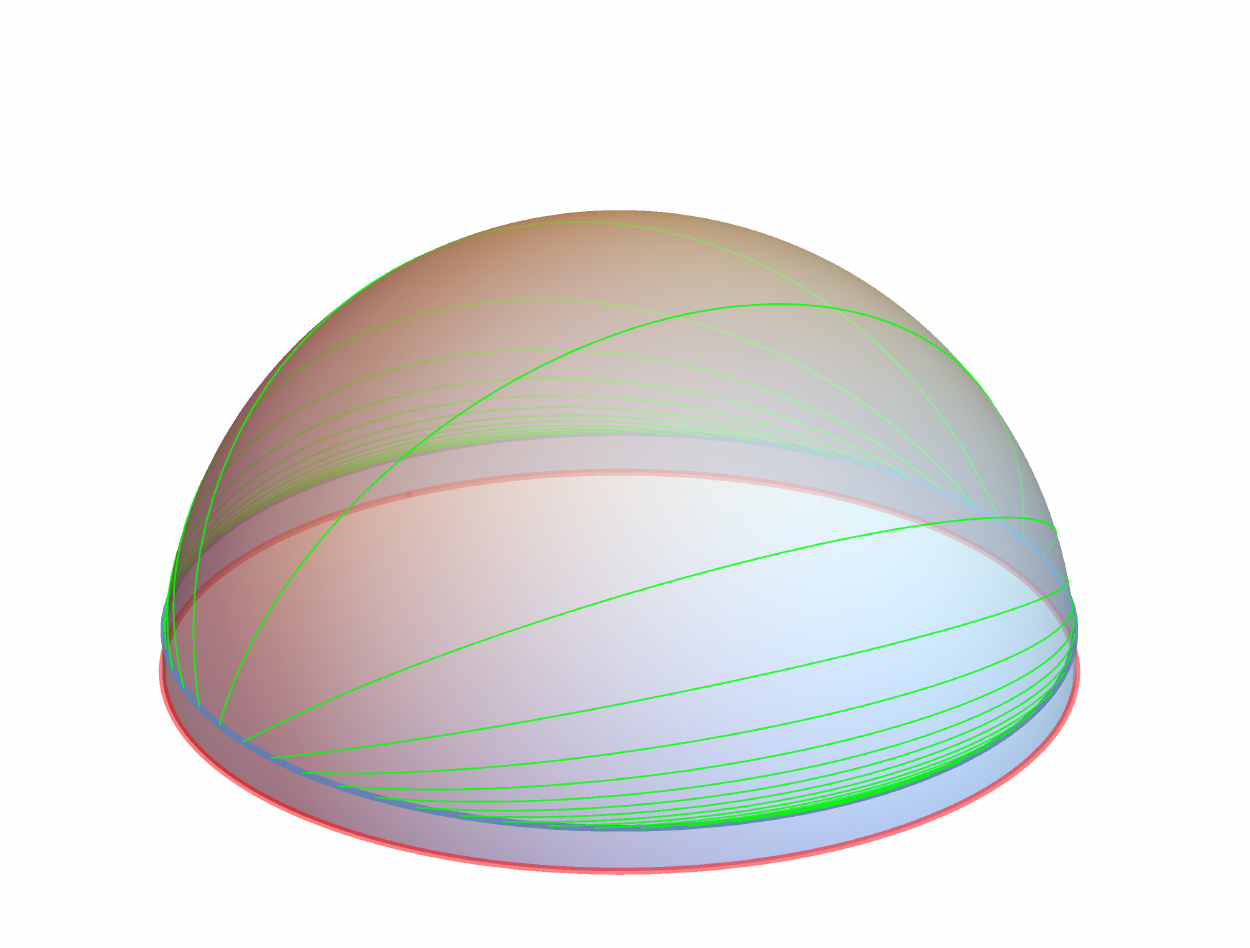}
\caption{The upper hemisphere of a 3-sphere of radius $\alpha$ is half of a static slice in empty de Sitter space and serves as a good approximation for $D_{\sigma(\tau)}$ at large $\tau$.  The blue 2-sphere (appearing as a circle here) lies at constant $z$ (equivalently, constant $r$ where $r$ is the radial coordinate in equation \ref{static_dS}).  This 2-sphere is an approximation for the leaf $\sigma(\tau)$.  Green surfaces depict extremal spherical caps on $S^3$ that approximate $\ext A_\psi$ for various values of $\psi$.  The many samples of extremal surfaces shown here have evenly spaced values of $\psi$.  Figure \ref{fig_CC_dust} provides evidence that this static sphere approximation is accurate at late $\tau$.}
\label{fig_3_sphere}
\end{figure}

We begin by further discussing the role of the 3-sphere in de Sitter space.  Figure \ref{fig_3_sphere} depicts a hemisphere of an $S^3$ of radius $\alpha$ which is precisely half of a static slice of de Sitter space (which we can freely take to be $t=0$).  Define a parameter $z$ as $z = \sqrt{\alpha^2 - r^2}$ where $r$ is the static radius appearing in equation \ref{static_dS}.  Note that a surface of constant $z$ (and static time) is an $S^2$ of area $4 \pi(\alpha^2-z^2)$.  This suggests a way to obtain an approximation for $S(A)$ if $A$ is a region in the leaf $\sigma(\tau)$. Rather than taking $A$ to be a region in $\sigma(\tau)$, we take figure \ref{dS_identifications} seriously and map $A$ to a region in the $S^2$ of constant
\begin{equation}
\label{z_def}
z = \sqrt{\frac{4 \pi \alpha^2 - \textrm{area}(\sigma(\tau))}{4\pi}}
\end{equation}
which ensures that this $S^2$ has the same area as $\sigma(\tau)$.  After this mapping is made, one computes $S(A)$ by finding the extremal surface on the $S^3$ that is anchored to $\partial A$ (which we take to lie at constant $z$).   Below we will refer to this procedure as the ``static sphere approximation.''

Consider regions in $\sigma(\tau)$ that are spherical caps.  Such a cap can be fixed (up to $SO(3)$ rotation) by a zenith opening angle angle $\psi$, so we will denote our region of $\sigma(\tau)$ by $A_\psi$. (With this notation, $A_{\pi/2}$ is a hemisphere and $A_\pi$ is the entire leaf.) The static sphere approximation makes it is clear that for $0< \psi \ll \pi/2$, $\ext A_\psi$ is close to $A_\psi$ itself and that for $\pi/2 \ll \psi < \pi$, $\ext A_\psi$ approaches $\sigma(\tau) \setminus A_\psi$.  As $\psi$ passes the transition angle $\pi/2$, $\ext A_\psi$ quickly passes over the top of the 3-sphere of radius $\alpha$.  The closer $\textrm{area}(\sigma(\tau))$ is to $4 \pi \alpha^2$, the faster $\ext A_\psi$ passes over the top of the sphere.  This explains how the discontinuity in the derivative of $S_\textrm{Page}^\infty(A)$ arises in the large $\tau$ limit.\footnote{For finite $\tau$, there is always another extremal surface on the 3-sphere which goes around the sphere the wrong way.  This surface always has area greater than $\ext A_\psi$ and, in any case, fails to lie in $D_{\sigma(\tau)}$.  However, if we consider the $\tau=\infty$ limit, then $\ext A_\psi$ does not smoothly pass over the hemisphere of the 3-sphere, and in this case, the discontinuity in the derivative of $S_\textrm{Page}^\infty(A)$ is explained by the fact that the surface that wraps around the sphere the ``wrong way'' is now precisely the complement of $A_\psi$ in the equator.  If $\psi$ exceeds $\pi/2$ in this case, then the complement of $A_\psi$ has smaller area than $A_\psi$.  We see that a phase transition occurs only in the exact $\tau=\infty$ limit.}

Because the geometry of $S^3$ is simple, it is not difficult to obtain an explicit (if cumbersome) expression for $S(A_\psi)$ in the static sphere approximation:

\begin{widetext}
\begin{equation}
\label{dS_entropy_static}
S(A_\psi) \approx   \pi     \sin ^2 \Bigg(\frac{1}{4} \cos ^{-1}\left[\frac{z^2}{\alpha^2} +  \left(1-\frac{z^2 }{\alpha^2}\right)  \cos 2 \psi  \right] \Bigg)
\end{equation}
\end{widetext}
where $z$ is given by equation \ref{z_def} and, as before, $\alpha = \sqrt{3/ 8 \pi \rho_\Lambda}$.  This expression can be thought of as giving a correction to the ``zeroth order'' expression $S(A_\psi) \approx S_\textrm{Page}^\infty(A_\psi)$.  Taking $\tau<\infty$ will lead to corrections in $1/\tau$ that are not described by the static sphere method.  It is an open question as to whether or not such corrections can, in principle, be of the same (or greater) order in $1/\tau$ as the one we have studied here.  However, numerical data that suggests that the static sphere approximation is accurate at large $\tau$ as we will now see.

\begin{figure}
\centering     
\subfigure[$z/\alpha \approx .05$]{\label{fig:a}\includegraphics[width=85mm]{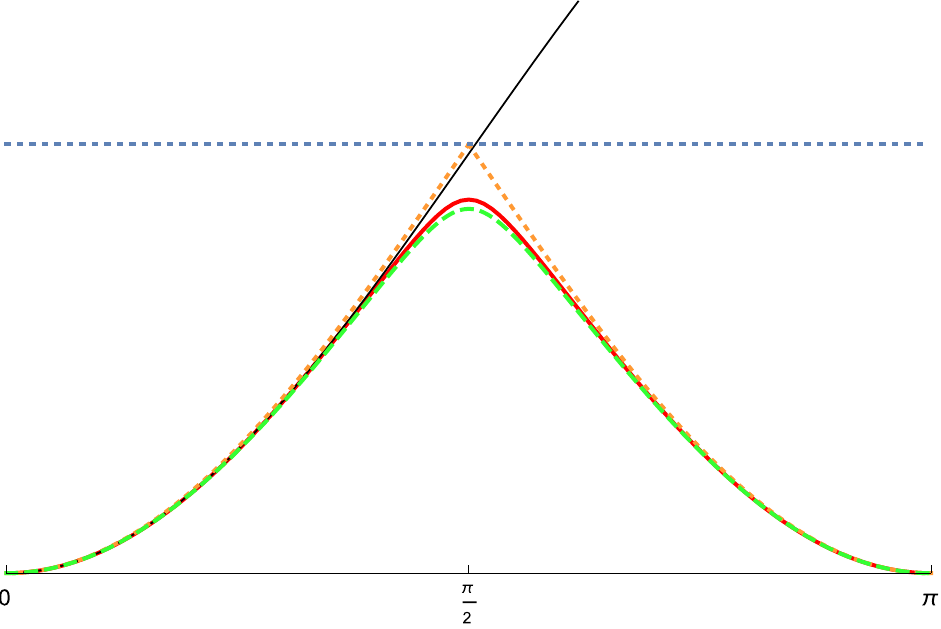}}
\subfigure[$z/\alpha \approx .02$]{\label{fig:b}\includegraphics[width=85mm]{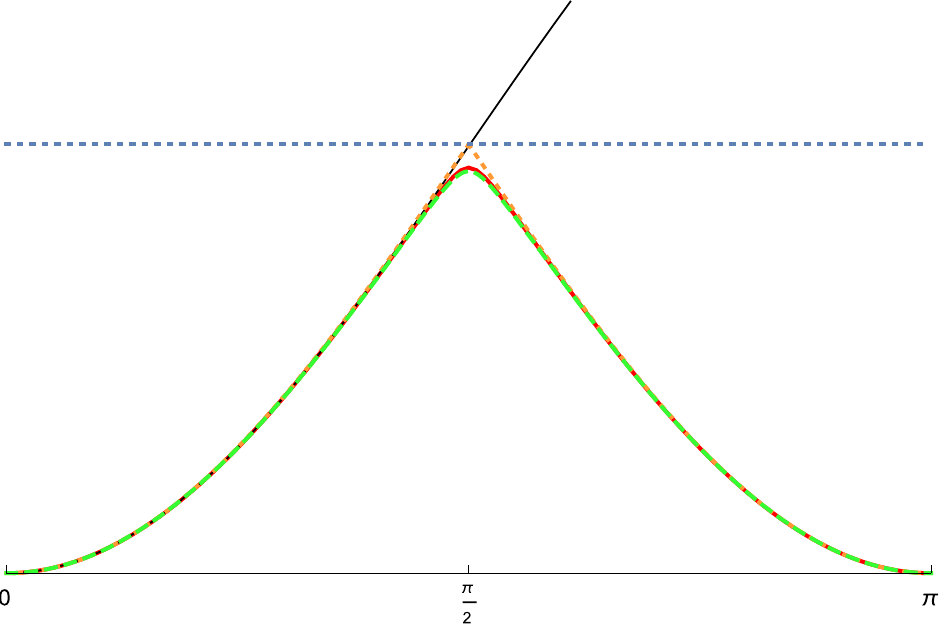}}
\caption{Plots of $S(A_\psi)$ and other quantities for two leaves at different times in a universe with dust and vacuum energy.  In both plots, the red curve is the numerically computed holographic screen entanglement entropy of $A_\psi$.  The dashed green curve is the static sphere approximation for $S(A\psi)$ which becomes more accurate at later $\tau$ (smaller $z$).  The orange curve with a sharp peak is $S_\textrm{Page}(A_\psi)$ as defined by equation \ref{thermal_entanglement} and the black curve is $S_\textrm{extensive}(A_\psi)$.  The horizontal line, provided for scale, marks the value of $\pi \alpha^2/2$ which is precisely one fourth of the extensive entropy of the de Sitter horizon.}
\label{fig_CC_dust}
\end{figure}

As explained above, the cosmological spacetimes we have been discussing have vacuum energy $\rho_\Lambda$ as well as some matter content that dilutes at late time.  The simplest case of this is when the universe is flat ($f(\chi) = \chi$) and when the additional matter content consists of only one species with density $\rho_\textrm{matter}$ and pressure $p_\textrm{m} = w \rho_\textrm{m}$.  The scale factor for this case is
\begin{equation}
a(\tau) = C \sinh \left[ {3  (1+w)  \tau \over 2  \alpha }  \right]^{2\over 3(1+w)}
\end{equation}
where the normalization factor $C$ is independent of $\tau$.

This setting is very useful to test the theoretical apparatus developed in this section.  In the case of $w=0$, figure \ref{fig_CC_dust} shows a variety of quantities we have discussed.  Figure \ref{fig_CC_dust} (a) and (b) depict the case of an earlier and later time leaf with $z/\alpha \approx .05$ and $z/\alpha \approx .02$ respectively.  The solid red curves show $S(A_\psi)$ (computed numerically) while the green curves give the static sphere approximation of equation \ref{dS_entropy_static}.  The dotted horizontal line marks half of the de Sitter entropy: $S_{1/2} =  \pi \alpha^2/2$.  As expected, $S(A_\psi) < S_{1/2}$. The orange curve with a discontinuity in its derivative is $S_\textrm{Page}^\infty(A_\psi)$.  Comparing figures \ref{fig_CC_dust} (a) and (b), one can see that $S(A_\psi)$ is approaching $S_\textrm{Page}^\infty(A_\psi)$ as $\tau \to \infty$.  Finally, the black curves shows extensive entropy: $S_\textrm{extensive}(A_\psi) = (1/4) \textrm{area}(A_\psi)$.  Note that $S(A_\psi) < S_\textrm{extensive}(A_\psi)$ for all $\psi$ as required by corollary \ref{thermal_bound_thm}.

\subsection*{Closed Universe with a Big Crunch}
\begin{figure}
\centering
\hspace*{-.72cm}   
\includegraphics[width=10cm]{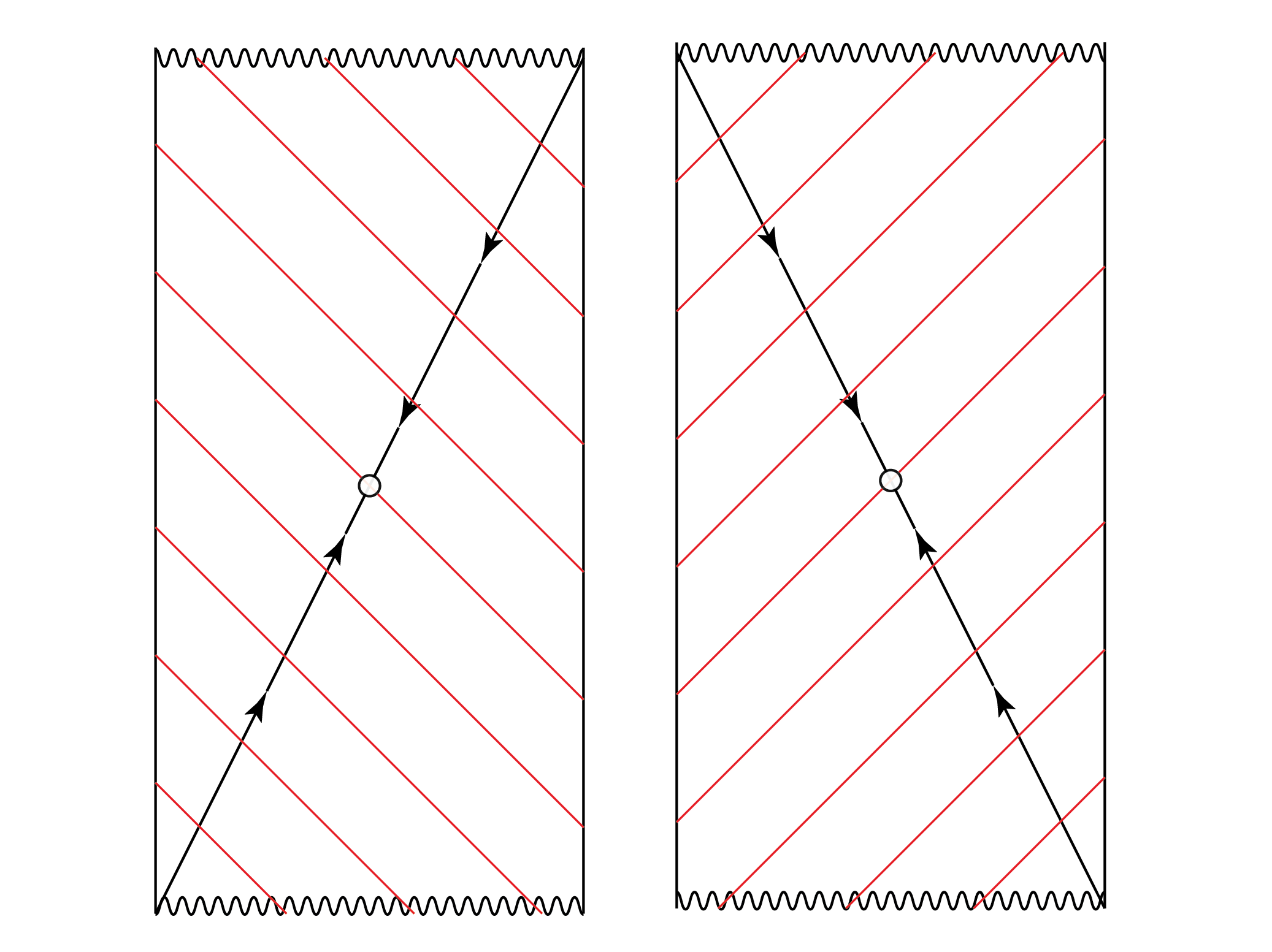}
\caption{Both Penrose diagrams here are for the same spacetime: a closed FRW universe with dust.  The red lines denote a null foliation and the black diagonals are the past and future holographic screens corresponding to the foliation.  The two figures demonstrate that different foliations give rise to different screens.  In both figures, the lower half of the diagonal is a past holographic screen and the upper half is a future holographic screen.  Arrows show the direction of increasing area.}
\label{fig_big_crunch}
\end{figure}

The holographic screen entanglement entropy structure of a closed universe with a past and future singularity is similar to that of approximate de Sitter space.  The spacetimes we consider have the metric of equation \ref{FRW_metric} with $f(\chi) = \sin(\chi)$.  In this case the coordinate $\chi$ takes values from $0$ to $\pi$.  We put one species of matter content in the spacetime that satisfies $p = w \rho$ which gives rise to a big bang at $\tau=0$ as well as a big crunch.  As before, we introduce a conformal time coordinate $\eta$ in terms of which the scale factor is
\[
a(\eta) = c  \left(\sin\frac{\eta}{q}\right)^{q}
\]
where $q = 2/(1+3w)$ and $c$ is constant.  This shows that the Penrose diagram for this spacetime is a rectangle with a time-to-space aspect ratio of $q$.  

Figure \ref{fig_big_crunch} shows the holographic screen structure of this spacetime for two examples of null foliations.  We focus on the diagram to the left in which case the null foliation (partially) consists of past light cones of a comoving worldline at the $\chi=0$.  As suggested by the figure, the holographic screen is given by
\[
\chi_\textrm{screen} = \frac{1}{q} \eta.
\]
However, a subtlety arises because the screen is a past holographic screen for $\eta <q \pi/2$ and a future screen for $\eta > q \pi/2$.  The sphere that connects the past and future screen is extremal (this was called an ``optimal'' surface in \cite{Bousso:1999cb}) and has area $4 \pi c^2$.  Let $\sigma(\eta)$ denote the leaf at conformal time $\eta$.  We put $\sigma_0 = \sigma(\eta = q \pi/2)$.  

Just as in the de Sitter case, this example leads to a saturation of the Page bound of equation \ref{thermal_bound} as leaves are maximized in area.  More precisely, if $A\subset \sigma(\eta)$, then $ \lim_{\eta \to q \pi /2} S(A) = S_\textrm{Page}^\infty(A)$ where in this case
\[
S_\textrm{Page}^\infty(A) =
\begin{cases}
\frac{1}{4} \textrm{area}(A) &  \textrm{area}(A) \leq  \frac{1}{2} \pi c^2 \\
\frac{1}{4}\left(4 \pi c^2 - \textrm{area}(A)\right) & \textrm{area}(A) > \frac{1}{2} \pi c^2.
\end{cases}
\]

It appears that $S(A)$ saturates the Page bound in a great variety of cases where the areas of leaves are bounded above.

\section{Concluding Remarks}
\label{sec_conclusion}
The proposal we have given above may open the door to a new research program: the study of the entanglement structure of general spacetimes.  In light of this, and for the sake of clarity, we now summarize the recipe for computing von Neumann entropy under the assumption of the screen entanglement conjecture discussed in section \ref{section2}:

\begin{enumerate}
\item Select a particular null foliation $\{ N_r \}$ of a spacetime with dimension $d$.
\item Find the codimension 2 surfaces $\{ \sigma_r \}$ with $\sigma_r \subset N_r$ that have maximal area on each $N_r$.
\item Take a $d-2$ dimensional subregion $A \subset \sigma_r$ with a boundary $\partial A$.
\item Of all extremal surfaces anchored to $\partial A$ and lying in the causal region $D_\sigma$ (see section \ref{section2}), select the one of minimal area.  The conjectured entropy $S(A)$ is then one fourth the area of the minimal extremal surface in Planck units.
\end{enumerate}

Potential applications of our conjecture are numerous.  One example not considered above is case of a spacetime with a black hole.  Black holes formed through the collapse of matter possess future holographic screens in their interiors that approach their horizons at late times.  It is of potential significance to investigate the entanglement structure of such spacetimes.  Perhaps such an analysis will shed light on the firewall paradox \cite{Almheiri:2012rt}.

If the screen entanglement conjecture is correct, it should still only be regarded as a leading order prescription for the computation of von Neumann entropies.  A version of the analysis of \cite{Faulkner:2013ana} may be extendible to the context of holographic screens.  It is not completely obvious how this should be done.  If $A$ is a region in a leaf $\sigma$ lying on a Cauchy slice $S_0$, one may consider the region on $S_0$ bounded by $A$ and its extremal surface $\ext (A)$ and compute the entanglement entropy of this region in a quantum field theory on the spacetime background.  On the other hand, it may be necessary to modify the spacetime position of the holographic screen itself as was done in \cite{Bousso:2015eda}.

\vskip .5cm
\indent {\bf Acknowledgments} 

We are very grateful to C. Akers, R. Bousso, Z. Fisher, D. Harlow, J. Koeller, M. Moosa, M. Van Raamsdonk, and A. Wall for for impactful discussions.  We thank R. Bousso and A. Wall for their comments on earlier versions of this work.  The work of FS is supported in part by the DOE NNSA Stewardship Science Graduate Fellowship.  The work of SJW is supported in part by the BCTP Brantley-Tuttle Fellowship for which he would like to extend his gratitude to Lynn Brantley and Douglas Tuttle.

\end{document}